\documentclass{ejm2}
\usepackage{latexsym,amsfonts,amsmath,cyracc,verbatim}
\usepackage{graphicx}

\title[Homogenization of a locally-periodic medium]{Homogenization of a locally-periodic medium with areas of low and high diffusivity}

\author[Tycho L. van Noorden and Adrian Muntean]{T.\ns L.\ns V\ls A\ls N\ns N\ls O\ls O\ls R\ls D\ls E\ls N$\,^1$\ns
\and A.\ns M\ls U\ls N\ls T\ls E\ls A\ls N$\,^{1,2}$}

\affiliation{$^1\,$Department of Mathematics and Computer Science,
Technische Universiteit Eindhoven, P.O. Box 513, 5600 MB Eindhoven,
The Netherlands\\
$^2\,$Institute of Complex Molecular Systems (ICMS), Technische
Universiteit Eindhoven, P.O. Box 513, 5600 MB Eindhoven, The
Netherlands}

\begin{document}
\maketitle

\begin{abstract}
We aim at understanding transport in porous materials including regions with both high and low diffusivities. For such scenarios, the transport becomes structured (here: {\em micro-macro}).
The geometry we have in mind includes regions of low diffusivity arranged in a locally-periodic fashion.
We choose a prototypical advection-diffusion system (of minimal size),
discuss its formal homogenization (the heterogenous medium being now assumed to be made of zones with
 circular areas of low diffusivity of $x$-varying sizes), and prove the weak solvability of the limit two-scale reaction-diffusion model.
  A special feature of our analysis is that most of the basic estimates (positivity, $L^\infty$-bounds, uniqueness, energy inequality)
   are obtained in $x$-dependent Bochner spaces.
\end{abstract}

{\bf Keywords}: Heterogeneous porous materials, homogenization, micro-macro transport, two-scale model, reaction-diffusion system, weak solvability

\newtheorem{theorem}{Theorem}[section]
\newdefinition{remark}[theorem]{Remark}
\newdefinition{assumption}{Assumption}
\newdefinition{lemma}[theorem]{Lemma}
\newdefinition{claim}[theorem]{Claim}
\newdefinition{definition}[theorem]{Definition}
\newdefinition{proposition}[theorem]{Proposition}

\section{Introduction}

We consider transport in heterogeneous media presenting regions with high and low diffusivities. Examples of such media are concrete and scavenger packaging materials.
 For the scenario we have in mind, the old classical idea to replace the heterogeneous medium by a homogeneous equivalent
  representation (see \cite{ADH,Auriault,panf-bourg,ptashnyk} and references therein) that gives the average behaviour of the medium
   submitted to a macroscopic boundary condition is not working anymore.
     Specifically, now the transport becomes structured (here: {\em micro-macro\footnote{``Micro" refers here to a continuum description of a porous
      subdomain at a separated (lower) spatial scale compared to the "macro" one.}}) \cite{BLM,HJM}.

The geometry we have in mind includes  space-dependent
perforations\footnote{By ``space-dependent perforations", we mean
that at each spatial position $x$, our  model will  allow us to zoom
in  a $x$-dependent pore space, or subject to a more general
interpretation,  a  $x$-dependent porous subdomain, called here
perforation. } arranged in a locally-periodic fashion. We refer the
reader to section \ref{micromodel} (in particular to Fig.
\ref{fig1}), where we explain our concept of local periodicity. Our
approach is based on the one developed in \cite{noorden4,noorden3}
and is conceptually related to, e.g., \cite{Chechkin,Tasnim}. When
periodicity is lacking, the typical strategy would be to tackle the
matter from the percolation theory perspective (see e.g. chapter 2
in \cite{hornung} and references cited therein\footnote{Fig. 2.3 (a)
from \cite{hornung}, p. 39
   illustrates a computer simulation of the consolidation of spherical grains showing regions with high and low porosities corresponding
   to high and low diffusivity areas.}) or to reformulate the oscillating problem in terms
    of stochastic homogenization
   (see e.g. \cite{BMP}). In this paper, we stay within a
   deterministic framework by deviating in a controlled manner (made precise in section \ref{micromodel}) from the purely periodic homogenization.

We show our working methodology for a prototypical diffusion system of minimal size.
  To keep presentation simple, our scenario does not include chemistry.
      With minimal effort, both our asymptotic technique and analysis can be extended to account for volume and surface reaction production terms
       and other linear micro-macro transmission conditions.  We only emphasize the fact that if chemical reactions take place,
       then most likely that they will be hosted by the micro-structures of the low-diffusivity regions.
We discuss the microscale model for the particular case
in which the heterogenous medium is only composed
  of zones with circular areas of low diffusivity of $x$-varying sizes. This assumption on the geometry should not be seen
   as a restriction. We only use it for ease of presentations and it does not play a role in our formal and
   analytical results. Our asymptotic strategy is based on a suitable expansion
    (remotely resembling the boundary unfolding operator \cite{Zaki}) of the boundary of the perforations in terms
     of level-set functions.
     In particular, we can treat in a quite similar way situations when free-interfaces travel the
     microstructure; we refer the reader to \cite{noorden4} for a dissolution precipitation free-boundary problem and \cite{CRAS} for a fast-reaction slow-diffusion
     scenario where we addressed the matter.

The results or our paper are twofold:
\begin{itemize}
\item[(i)] We develop a strategy to deal (formally) with the asymptotics $\epsilon\to 0$ for a locally periodic medium (where $\epsilon>0$ is the microstructure width)
and derive a macroscopic equation and {\em $x$-dependent} effective
transport coefficients (porosity, permeability, tortuosity)
 for the species undergoing fast transport (i.e. that one living in high diffusivity areas), while we preserve the precise geometry
  of the microstructure and corresponding balance equation. The result of this homogenization procedure is a distributed-microstructure model in the terminology of R. E. Showalter, which we refer here as {\em two-scale model}.
\item[(ii)] We analyze the solvability of the resulting two-scale model. Solutions of the two-scale model are elements of $x$-dependent Bochner spaces.
Our approach benefits from previous work on two-scale models by, e.g.,
Showalter and Walkington \cite{Show_walk}, Eck \cite{Eck_habil}, and
Meier and B\"ohm \cite{sebam_PhD,sebam_equadiff}. A special
feature of our analysis is that most of the basic estimates
(positivity, $L^\infty$-bounds, uniqueness, energy inequality)
   are obtained in the $x$-dependent Bochner spaces. Our existence proof
   is constructed using a Schauder fixed-point argument and
   is an alternative to \cite{Show_walk}, where the situation is
   formulated as a Cauchy problem in Hilbert spaces and then resolved
   by holomorphic semigroups, or to \cite{sebam_PhD}, where a Banach-fixed point argument for the
    problem stated in transformed domains (i.e. $x$-independent) is employed.
\end{itemize}
Note that (i) and (ii) are preliminary results preparing the
framework for rigorously proving a convergence rate for the
asymptotics $\epsilon\to
  0$;
  we will address this convergence issue elsewhere.

The paper is organized in the following fashion: Section
\ref{micromodel} contains the description of the model equations at
the micro scale together with the precise geometry of our
$x$-dependent microstructure for the particular case of circular
perforations.  The homogenization procedure is detailed in section
\ref{homogenization}. The main result of this part of the  paper is
the two-scale model equations  as well as a couple of effective
coefficients reported in section \ref{upscaled}. The second part of
the paper focusses on the analysis of the two-scale model; see
section \ref{analysis}. The main result, i.e. Theorem
\ref{main_result}, ensures the global-in-time existence of weak
solutions to our two-scale model and appears at the end of section
\ref{existence}. A brief discussion section concludes the paper.

\section{Model equations}\label{micromodel}

\begin{figure}
\begin{center}
\includegraphics[width=8cm]{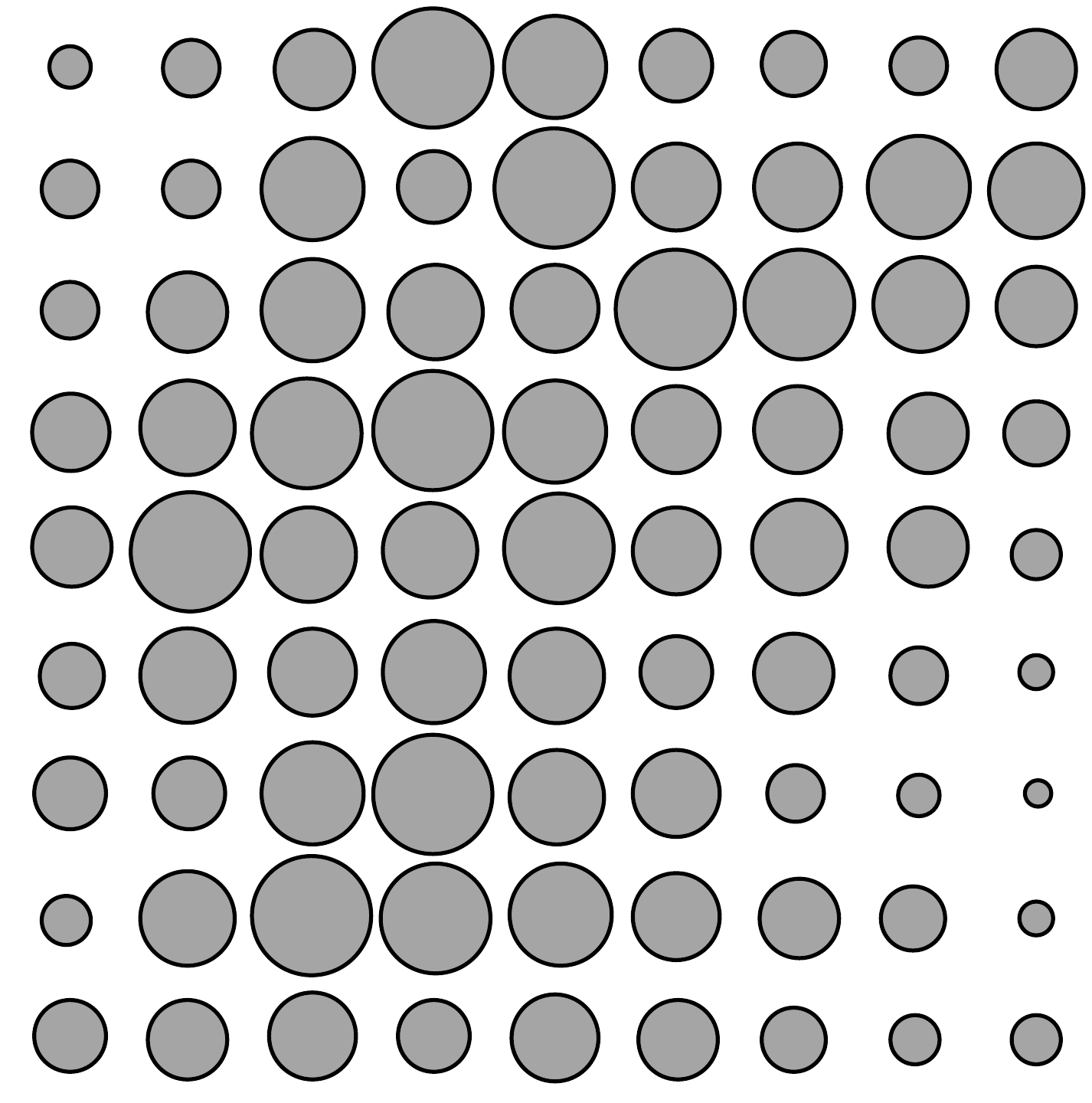}
\end{center}
\caption{Schematic representation of a locally-periodic heterogeneous medium.
The centers of the gray circles are on a grid with width $\epsilon$.
These circles represent the areas of low diffusivity and their radii may vary.
\label{fig1}}
\end{figure}
We consider a heterogenous medium consisting of areas of
high and low diffusivity.
The medium is in the present paper represented by a two dimensional domain.
We denote the two dimensional bounded domain by $\Omega\subset \mathbb{R}^2$, with boundary $\Gamma$, and for ease of presentation we suppose in this section that the areas of the medium with low
diffusivity are circles. We do not use this restriction in later sections;
there the areas with low diffusivity can have different shapes, as long as neighboring areas do not touch each other.

Let the centers of the circles $B_{ij}$ with low diffusivity, with radius $R_{ij}<\epsilon/2$, be located in a equidistant grid with nodes at $(\epsilon i, \epsilon j)$, where
$\epsilon$ is a small dimensionless length scale. We assume that there is given
a function $r(x):\Omega \rightarrow [0,1/2)$ such that
the radii $R_{ij}$ of the circles $B_{ij}$ are given
by $R_{ij}=\epsilon r(x_{ij})$, where $x_{ij}=(\epsilon i, \epsilon j)$.
We define the area of low diffusivity $\Omega^\epsilon_l$, which is the collection of the circles of low diffusivity, as $\Omega^\epsilon_l:=\cup B_{ij}$ and we define the area of high diffusivity $\Omega^\epsilon_h$, which is the complement of $\Omega^\epsilon_l$ in $\Omega$, as
$\Omega^\epsilon_h:=\Omega\backslash \Omega^\epsilon_l$.
The boundary between high and low diffusivity areas is denoted by $\Gamma^\epsilon$, which is given by $\Gamma^\epsilon:=\partial \Omega^\epsilon_l$.
It is important to note that we assume that the circles of low diffusivity
do not touch each other, so that
$\Gamma_{ij}\cap\Gamma_{kl}=\emptyset$ if $i\neq k$ or $j\neq l$, where
$\Gamma_{ij}:=\partial B_{ij}$, and we also assume that the area of low permeability does not intersect the outer boundary of the domain $\Omega$, so that
$\Gamma\cap\Gamma_{ij}=\emptyset$ for all $i,j$.

We denote the tracer concentration in the high diffusivity area by
$u^\epsilon$, the concentration in the low diffusivity area by $v^\epsilon$, the
velocity of the fluid phase by $q^\epsilon$ and the pressure
by $p^\epsilon$. All these unknowns are dimensionless.
In the high diffusivity area we assume for the fluid flow a Darcy-like law and incompressibility, while we neglect fluid flow in the low diffusivity area. The diffusion coefficient in the low diffusivity area is assumed to be of the order of $O(\epsilon^2)$, while all the remaining coefficients are of the order of $O(1)$ in $\epsilon$. We assume continuity of concentration and of fluxes across
the boundary between the high and low diffusivity areas.

The model is now given by
\begin{eqnarray}
&&\begin{cases}
u^{\epsilon}_{t}=\nabla\cdot(D_h \nabla u^{\epsilon}-q^{\epsilon}u^{\epsilon}) &\\
q^{\epsilon} = -\kappa \nabla p^{\epsilon} &\\
\nabla\cdot q^{\epsilon}=0 &
\end{cases}
\,\,\,\,\,\,\mbox{   in   } \Omega_h^{\epsilon},\label{dimlesseq1}
\\
&&
\begin{cases}
v^{\epsilon}_{t}=\epsilon^2\nabla\cdot(D_l \nabla v^{\epsilon}) &
\,\,\,\,\,\,\mbox{   in   } \Omega_l^{\epsilon},
\end{cases}\label{dimlesseq1b}
\\
&&\begin{cases}
\nu^\epsilon\cdot(D_h \nabla u^{\epsilon})=
\epsilon^2\nu^\epsilon\cdot(D_l \nabla v^{\epsilon}) & \\
u^\epsilon=v^\epsilon & \\
q^{\epsilon}=0 &
\end{cases}
\,\,\,\,\,\, \mbox{   on   } \Gamma^{\epsilon}, \label{dimlesseq2}\\
&&\begin{cases}
u^\epsilon(x,t)=u_b(x,t) & \\
q^\epsilon(x,t)=q_b(x,t) & \\
\end{cases} \,\,\,\,\,\,\, \mbox{on}\,\,\, \Gamma,\label{bcg}\\
&&\begin{cases}
u^{\epsilon}(x,0)=u^\epsilon_{I}(x)&\mbox{   in   } \Omega_h^{\epsilon},\\
v^{\epsilon}(x,0)=v^\epsilon_{I}(x)&\mbox{   in   } \Omega_l^{\epsilon},\\
\end{cases}
\label{dimlesseq3}
\end{eqnarray}
where $D_h$ denotes the diffusion coefficient in the high diffusivity
region, $D_l$ the diffusion coefficient in the low diffusivity regions, $\kappa$
denotes the permeability in the Darcy law for the flow in the high diffusivity
region, $\nu^\epsilon$ denotes the unit normal to the boundary $\Gamma^\epsilon(t)$, where $q_b$ and $u_b$ denote the Dirichlet boundary data for
the concentration $u^\epsilon$ and Darcy velocity $q^\epsilon$ and where
$u_I^\epsilon$ and $v_I^\epsilon$ denote initial value data for the concentration
$u^\epsilon$ and $v^\epsilon$.

\section{Formal homogenization}\label{homogenization}
In addition to the macroscopic variable $x$, we introduce a periodic
unit cube $U$ with microscopic variable $y$:
\begin{align}
y=(y_1,y_2),\,\, \mbox{and}\,\,
U:=\{y\in \mathbb{R}^2\,|\, -1/2\leq y_i \leq 1/2\,\, \mbox{for}\,\, i=1,2\}).
\end{align}
For the formal homogenization we assume the following formal asymptotic expansions for $u^{\epsilon}$, $v^{\epsilon}$, $q^{\epsilon}$
and $p^\epsilon$:
\begin{eqnarray*}
u^{\epsilon}(x,t)&=&u_{0}(x,x/\epsilon,t)+\epsilon u_{1}(x,x/\epsilon,t)+\epsilon^{2}u_{2}(x,x/\epsilon,t)+...\\
v^{\epsilon}(x,t)&=&v_{0}(x,x/\epsilon,t)+\epsilon v_{1}(x,x/\epsilon,t)+\epsilon^{2}v_{2}(x,x/\epsilon,t)+...\\
q^{\epsilon}(x,t)&=&q_{0}(x,x/\epsilon,t)+\epsilon q_{1}(x,x/\epsilon,t)+\epsilon^{2}q_{2}(x,x/\epsilon,t)+...\\
p^{\epsilon}(x,t)&=&p_{0}(x,x/\epsilon,t)+\epsilon p_{1}(x,x/\epsilon,t)+\epsilon^{2}p_{2}(x,x/\epsilon,t)+...
\end{eqnarray*}
where $u_{k}(\cdot,y,\cdot)$, $v_{k}(\cdot,y,\cdot)$, $q_{k}(\cdot,y,\cdot)$ and $p_k(\cdot,y,\cdot)$
are 1-periodic in $y=\frac{x}{\epsilon}$.
The gradient of a function $f(x,\frac{x}{\epsilon})$, depending on
$x$ and $y=\frac{x}{\epsilon}$ is given by
\begin{eqnarray}
\nabla f = \nabla_x f +\frac{1}{\epsilon}\nabla_y f|_{y=\frac{x}{\epsilon}},\label{chainrule}
\end{eqnarray}
where $\nabla_x$ and $\nabla_y$ denote the gradients with respect to
the first and second variables of $f$.

\subsection{Level set formulation of the perforations boundary}
Since the location of the interfaces between the low and the high diffusivity regions also depends on $\epsilon$, we need an
$\epsilon$-dependent parametrization of these interfaces.
A convenient way to parameterize the interfaces
is to use a level set function, which we denote by
$S^\epsilon(x)$:
\begin{align*}
x\in \Gamma^\epsilon \Leftrightarrow S^\epsilon(x)=0.
\end{align*}
Since we allow the size and shape of the perforations to vary with the
macroscopic variable $x$, we might use the following characterization
of $S^\epsilon$:
\begin{align}
S^\epsilon(x)=S(x,x/\epsilon)
\end{align}
where $S:\Omega\times U \rightarrow \mathbb{R}$ is 1-periodic in its second
variable, and is independent of $\epsilon$.
In this section we show, using the example of a grid of circles with varying sizes, that this characterization of $S^\epsilon$ is not sufficient to characterize all locally-periodic sequences of perforation geometries.
In fact, we need to expand $S^\epsilon$ as
\begin{align}
S^\epsilon(x)=S_0(x,x/\epsilon)+\epsilon S_1(x,x/\epsilon)+\epsilon^2S_2(x,x/\epsilon)+...
\end{align}
where the $S_i:\Omega\times U \rightarrow \mathbb{R}$ are 1-periodic in their second variable, for $i=0,1,2,...$ and are independent of $\epsilon$.

In order to find an explicit expression for $S^\epsilon(x)$ in this particular
case, i.e.\ the case of circular domains with radius $r(x)$ (see Fig.\ \ref{fig1}),
we define $P(x)$ to be the periodic extension of the function $x\rightarrow |x|$
and $Q(x)$ to be the periodic extension of the function $x\rightarrow x$,
both defined on the square $[-\frac{1}{2},\frac{1}{2}]\times[-\frac{1}{2},\frac{1}{2}]$, given by
\begin{align*}
P(x)&=P(x_1,x_2)=\sqrt{\lfloor x_1+1/2\rfloor^2+ \lfloor x_2+1/2\rfloor^2},\\
Q(x)&=Q(x_1,x_2)=(\lfloor x_1+1/2\rfloor,\lfloor x_2+1/2\rfloor),
\end{align*}
where $\lfloor a \rfloor := \max\{n\in \mathbb{Z} \,|\, n\leq a \}$
denotes the floor of $a$ (rounding down).
We can write $S^\epsilon(x)$ as follows:
\begin{eqnarray}
S^\epsilon(x)=r(x-\epsilon Q(x/\epsilon))-P(x/\epsilon).\label{TB}
\end{eqnarray}
Interestingly, the expression (\ref{TB}) plays the same role as the boundary unfolding operator (cf., for instance, \cite{Zaki} Definition 5.1).
Note that $S^\epsilon$ is not a continuous function, it jumps when $x_1$ or $x_2$
cross a multiple of $\epsilon$. Whenever we assume that $r(x,t)<1/2$, this is not
a problem, since in this case $S^\epsilon$ is continuous and smooth in a neighborhood of its zero level set, which is what we are interested in.

To check that the zero level set of $S^\epsilon$ consists indeed
of circles around $x_{ij}$ with radius $\epsilon r(x_{ij})$, we
consider a curve, which without loss of generality can be parametrized in the square with sides $\epsilon$ around $x_{ij}$ by $x_{ij}+\gamma(s)$.
For this curve to be a zero level set, it should hold that
\begin{eqnarray*}
r(x_{ij}+\gamma(s)-\epsilon Q(\epsilon^{-1}(x_{ij}+\gamma(s))))=P(\epsilon^{-1}(x_{ij}+\gamma(s))).
\end{eqnarray*}
Using that $x_{ij}=(\epsilon i,\epsilon j)$, with $\epsilon i,\epsilon j\in \epsilon \mathbb{Z}\cap \Omega$, we obtain
\begin{eqnarray*}
r( (\epsilon i,\epsilon j)+\gamma(s)-\epsilon Q((i,j)+\epsilon^{-1}\gamma(s)))=P((i,j)
+\gamma(s)),
\end{eqnarray*}
and using the periodicity of $P$ and $Q$ we get
\begin{eqnarray*}
r(x_{ij})=|\gamma(s)|,
\end{eqnarray*}
which means that $\gamma(s)$ should be a circle with radius $r(x_{ij})$.

Now we can write the level set function
$S^\epsilon$ formally as the expansion
\begin{eqnarray*}
S^{\epsilon}(x)&=&S_{0}(x,x/\epsilon)+\epsilon S_{1}(x,x/\epsilon)+\epsilon^{2}S_{2}(x,x/\epsilon)+O(\epsilon^3),
\end{eqnarray*}
where $S_k(\cdot,y,\cdot)$, for $k=0,1,2,...$, are 1-periodic in $y=\frac{x}
{\epsilon}$, and are independent of $\epsilon$.
In order to find the terms in this expansion,
we assume that $r$ is sufficiently smooth and so that we can use the
Taylor series of $r$ around $x$:
\begin{eqnarray*}
r(x-\epsilon Q(x/\epsilon)=r(x)-\epsilon Q(x/\epsilon)\cdot \nabla r(x)+
\frac{\epsilon^2}{2}Q(x/\epsilon)\cdot {\cal D}^2r(x)Q(x/\epsilon)+O(\epsilon^3),
\end{eqnarray*}
where ${\cal D}^2r$ denotes the Hessian of $r$ w.r.t.\ $x$.
This suggests the following definition of the terms in the expansion of $S^\epsilon$:
\begin{align*}
S_0(x,x/\epsilon)&:=r(x)-P(x/\epsilon),\\
S_1(x,x/\epsilon)&:=-Q(x/\epsilon) \cdot \nabla r(x),\\
S_2(x,x/\epsilon)&:=\frac{1}{2}Q(x/\epsilon)\cdot {\cal D}^2r(x)Q(x/\epsilon),
\\
&\vdots
\end{align*}

\subsection{Interface conditions}
In (\ref{dimlesseq2}$_1$) we have used the superscript $\epsilon$ for the normal
vector $\nu^\epsilon$ in the interface conditions for $v^\epsilon$ and $u^\epsilon$. The reason is that the normal vector depends
on the geometry of the different regions, and this in turn depends on
$\epsilon$. In order to perform the steps of formal homogenization, we have to expand
$\nu^\epsilon$ in a power series in $\epsilon$.
This can be done in terms of the level set function $S^\epsilon$:
\begin{align}
\nu^\epsilon=\frac{\nabla S^\epsilon(x,x/\epsilon)}{|\nabla S^{\epsilon}(x,x/\epsilon)|}\,\,\, \mbox{at}\,\,\, x\in \Gamma^\epsilon.
\end{align}
First we expand $|\nabla S^{\epsilon}|$. Using the
chain rule \eqref{chainrule} (see also \cite{hornung}),
the expansion of $S^{\epsilon}$ and the Taylor series of the
square-root function, we obtain
\begin{eqnarray}
|\nabla S^{\epsilon}|&=&
\frac{1}{\epsilon}|\nabla_{y} S_0|+O(\epsilon^0). \label{nabsex}
\end{eqnarray}
In the same fashion, we get
\begin{eqnarray*}
\nu^{\epsilon}=\nu_{0}+\epsilon \nu_{1}+O(\epsilon^{2}),
\end{eqnarray*}
where
\begin{eqnarray*}
\nu_{0}:=\frac{\nabla_{y}S_0}{|\nabla_{y}S_0|}
\end{eqnarray*}
and
\begin{eqnarray*}
\nu_{1}&:=&\frac{\nabla_{x}S_0+\nabla_yS_1}{|\nabla_{y}S_0|}-
\frac{(\nabla_{x}S_0\cdot \nabla_{y}S_0+\nabla_yS_0\cdot \nabla_y S_1)}{|\nabla_{y}S_0|^{2}}\frac{\nabla_{y}S_0}{|\nabla_{y}S_0|}.
\end{eqnarray*}
If we introduce the normalized tangential vector $\tau_{0}$, with
$\tau_0\perp\nu_{0}$,
we can rewrite $\nu_1$ as
\begin{eqnarray}
\nu_{1}&=&\tau_{0}\frac{\tau_{0}\cdot(\nabla_{x}S_0+\nabla_yS_1)}{|\nabla_{y}S_0|}.\label{nu1}
\end{eqnarray}
Now we focus on the interface conditions posed at $\Gamma^\epsilon$.
In order to obtain interface conditions in the auxiliary problems, we
substitute the
expansions of $u^{\epsilon}$, $q^\epsilon$, and $\nu^\epsilon$ into \eqref{dimlesseq2}.
This is not so straight-forward as it may seem, since the interface
conditions \eqref{dimlesseq2} are enforced at the oscillating interface
$\Gamma^{\epsilon}$, i.e.
at every $x$ where $S^{\epsilon}(x)=0$.
For formulating the upscaled model it would be convenient
to have boundary conditions enforced at
\begin{align}
\Gamma_{0}(x):=\{y\,|\,S_0(x,y)=0\}.
\end{align}
To obtain them, we suppose that we can parametrize
the part of the boundary $\Gamma^{\epsilon}_{ij}$ that surrounds the sphere $B_{ij}$ with $k^{\epsilon}(s)$, so that holds
\begin{eqnarray*}
S^{\epsilon}(k^{\epsilon}(s))=0,
\end{eqnarray*}
and we assume that we can
expand $k^{\epsilon}(s)$ using the formal asymptotic
expansion
\begin{eqnarray}
k^{\epsilon}(s)=x_{ij}+\epsilon k_{0}(s)+\epsilon^{2} k_{1}(s)+O(\epsilon^{3}).
\label{kexp}
\end{eqnarray}
Using the expansion for $S^{\epsilon}$, the periodicity of $S_{i}$ in $y$,
and the Taylor series of $S_{0}$ and $S_{1}$ around $(x,k_0)$, we obtain
\begin{eqnarray*}
S_{0}(x,k_{0})+\epsilon(S_{1}(x,k_{0})+k_{0}\cdot \nabla_{x}S_{0}(x,k_{0})
+k_{1}\cdot\nabla_{y}S_{0}(x,k_{0}))+O(\epsilon^{2})=0.
\end{eqnarray*}
Collecting terms with the same order of $\epsilon$, we see that
$k_{0}(s)$ parametrizes locally the zero level set of $S_{0}$:
\begin{eqnarray*}
S_{0}(x,k_{0})=0.
\end{eqnarray*}
For $k_{1}$, we have the equation
\begin{eqnarray}
S_{1}(x,k_{0})+k_{0}\cdot \nabla_{x}S_{0}(x,k_{0})
+k_{1}\cdot\nabla_{y}S_{0}(x,k_{0})=0. \label{k1eq}
\end{eqnarray}
It suffices to seek for $k_1$ that is aligned with $\nu_0$, so that
we write
\begin{eqnarray}
k_{1}(s)=\lambda(s))\nu_{0}(s)=\lambda\frac{\nabla_{y} S_{0}}{|\nabla_{y}S_{0}|},\label{exprk}
\end{eqnarray}
where, using \eqref{k1eq}, $\lambda$ is given by
\begin{eqnarray}
\lambda:=-\frac{S_{1}}{|\nabla_{y}S_{0}|}-\frac{k_{0}\cdot
\nabla_{x}S_{0}}{|\nabla_{y}S_{0}|}. \label{explam}
\end{eqnarray}
Each of the boundary conditions in \eqref{dimlesseq2} admits the structural
form
\begin{eqnarray*}
K(x,x/\epsilon)=0\,\mbox{   for all }\,x\in\Gamma^{\epsilon},
\end{eqnarray*}
where $K$ is a suitable linear combination of $u^\epsilon$, $\nabla u^\epsilon$, $q^\epsilon$, $p^\epsilon$, $v^\epsilon$, and $\nabla v^\epsilon$.
Using \eqref{kexp} and the Taylor series of $K$ around $(x,k_{0})$, we obtain
\begin{eqnarray}
K(x,k_{0})&+&\epsilon(k_{0}\cdot\nabla_{x}K(x,k_{0})+k_{1}\cdot\nabla_{y}K(x,k_{0})) \nonumber\\
&&+\frac{\epsilon^2}{2}(k_0,k_1)\cdot ({\cal D}^2K(x,k_0))(k_0,k_1)+\epsilon^3(...)=0,\label{Ktay}
\end{eqnarray}
where ${\cal D}^2K$ denotes the Hessian of $K$ w.r.t.\ $x$ and $y$.
Substituting \eqref{exprk} into \eqref{Ktay}, we can restate \eqref{Ktay} in the
following way:
\begin{eqnarray}
K(x,y)&+&\epsilon(y\cdot\nabla_{x}K(x,y)+\lambda\nu_{0}\cdot\nabla_{y}
K(x,y))\nonumber\\
&&+\frac{\epsilon^2}{2}(y,\lambda\nu_0)\cdot ({\cal D}^2K(x,y))(y,\lambda\nu_0)+O(\epsilon^3)=0
\,\,\mbox{for all}\,y\in \Gamma_{0}(x).\label{Ktay2}
\end{eqnarray}
In order to proceed further, we make use of the following technical lemmas.
Their proofs can be found in \cite{noorden4}.
\begin{lemma}\label{lemma1}
Let $g(x,y)$ be a scalar function such that $g(x,y)=0$ for all
$y\in \Gamma_0(x)$, $x\in\Omega$ and $t\geq 0$. Then it holds that
\begin{equation*}
\nabla_xg=\frac{\nu_0\cdot\nabla_yg}{|\nabla_y S_0|}\nabla_x S_0,
\,\,\,\mbox{for}\,\,\,x\in\Omega,\,\,y\in\Gamma_0(x,t).
\end{equation*}
\end{lemma}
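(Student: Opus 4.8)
The plan is to use that a function vanishing on the zero level set of $S_0$ has, at each point of that set, a \emph{total} gradient (in $x$ and $y$ jointly) that is parallel to the total gradient of $S_0$; the stated identity is then extracted by matching the common proportionality factor against the $y$-components.

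First I would freeze the parameter $t$ and work in the product $\Omega\times U$, setting $M:=\{(x,y)\in\Omega\times U:\ S_0(x,y)=0\}$ and writing $\nabla_{(x,y)}=(\nabla_x,\nabla_y)$. Since $|\nabla_yS_0|$ occurs in the denominator of the assertion it is tacitly nonzero on $M$, so $\nabla_{(x,y)}S_0\neq0$ there and, by the implicit function theorem, $M$ is a $C^1$ hypersurface with tangent space $T_{(x,y)}M=\ker\nabla_{(x,y)}S_0$ and normal line $\mathrm{span}\,\nabla_{(x,y)}S_0$. (This forces the standing regularity hypotheses $g,S_0\in C^1$ in a neighbourhood of $M$, which is what makes $\nabla_xg,\nabla_yg$ meaningful in the first place.)

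Next I would differentiate the hypothesis along $M$: for every $C^1$ curve $s\mapsto(\xi(s),\eta(s))$ contained in $M$ one has $g(\xi(s),\eta(s))\equiv0$, hence $\nabla_xg\cdot\xi'+\nabla_yg\cdot\eta'=0$; since the velocities $(\xi',\eta')$ range over all of $T_{(x,y)}M=\ker\nabla_{(x,y)}S_0$, this says $\nabla_{(x,y)}g\perp T_{(x,y)}M$, i.e.\ $\nabla_{(x,y)}g=c(x,y)\,\nabla_{(x,y)}S_0$ for a scalar $c(x,y)$. Reading off components gives $\nabla_xg=c\,\nabla_xS_0$ and $\nabla_yg=c\,\nabla_yS_0$ on $M$; taking the inner product of the latter with $\nu_0=\nabla_yS_0/|\nabla_yS_0|$ yields $\nu_0\cdot\nabla_yg=c\,|\nabla_yS_0|$, so $c=(\nu_0\cdot\nabla_yg)/|\nabla_yS_0|$, and substituting into the former produces exactly $\nabla_xg=\dfrac{\nu_0\cdot\nabla_yg}{|\nabla_yS_0|}\,\nabla_xS_0$.

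There is essentially no deep obstacle here: the argument is entirely local in $x,y$, the periodicity in $y$ and the time parameter $t$ are irrelevant, and the only point requiring attention is the regularity bookkeeping noted above, which guarantees that $M$ is a genuine hypersurface and that the elementary fact ``the gradient of a function constant on a smooth hypersurface is normal to it'' may legitimately be invoked.
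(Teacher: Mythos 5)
Your proof is correct and rests on the same underlying computation as the paper's, just packaged more invariantly. The paper's argument parametrizes $\Gamma_0(x)$ by $k_0(s,x)$, differentiates the identities $S_0(x,k_0(s,x))=0$ and $g(x,k_0(s,x))=0$ with respect to $x_i$ to obtain $\partial_{x_i}S_0=-\nabla_yS_0\cdot\partial_{x_i}k_0$ and $\partial_{x_i}g=-\nabla_yg\cdot\partial_{x_i}k_0$, and then eliminates $\partial_{x_i}k_0$ using that $\nabla_yg$ is parallel to $\nu_0$ (since $g$ vanishes on the curve $\Gamma_0(x)$ for each fixed $x$). You instead pass to the joint zero level set $M\subset\Omega\times U$ and invoke once the fact that the full gradient $\nabla_{(x,y)}g$ of a function vanishing on a hypersurface is proportional to $\nabla_{(x,y)}S_0$; reading off the $x$- and $y$-components and solving for the proportionality factor $c$ from $\nu_0\cdot\nabla_yg=c\,|\nabla_yS_0|$ gives the identity. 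The two routes are equivalent and need the same tacit hypotheses ($g,S_0\in C^1$ near the level set, $\nabla_yS_0\neq0$ there); your version avoids introducing a parametrization and has the merit of stating the regularity assumptions explicitly, while the paper's componentwise computation is the form in which the lemma is actually applied later (with $g=\nu_0\cdot F_h$).
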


\begin{lemma}\label{lemma2}
Let $F(x,y)$ be a vector valued function such that $\nabla_y\cdot F(x,y)=0$
on $Y_0(x):=\{y\,|\,S_0(x,y)>0\}$
and $\nu_0\cdot F(x,y)=0$ on
$\Gamma_0(x)$ for all $x\in\Omega$. Then it holds that
\begin{equation*}
\int_{\Gamma^{0}(x)}\frac{\tau_{0}\cdot\nabla_{y}S_{1}}
{|\nabla_{y}S_{0}|}\tau_{0}\cdot F  - \frac{S_{1}}{|\nabla_{y}S_{0}|}
\nu_{0}\cdot\nabla_{y}(\nu^{0}\cdot F)\,d\sigma=0,
\,\,\,\mbox{for}\,\,\,
x\in\Omega.
\end{equation*}
\end{lemma}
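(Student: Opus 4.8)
The plan is to prove the identity by a tangential integration by parts directly on the curve $\Gamma_0(x)$, using only two pointwise facts that hold on it: $\nu_0\cdot F=0$ (a hypothesis) and $\nabla_y\cdot F=0$ (which follows by continuity from $\nabla_y\cdot F=0$ on $Y_0(x)$ once $F\in C^1$ up to $\Gamma_0(x)$; recall $\Gamma_0(x)=\{y\in U\,|\,|y|=r(x)\}$ is a smooth embedded circle because $r<1/2$). I parametrise $\Gamma_0(x)$ by arc length and write $\partial_s$ for the tangential derivative, so $\tau_0\cdot\nabla_y g=\partial_s g$ for every scalar $g$ restricted to $\Gamma_0(x)$, with $\{\tau_0,\nu_0\}$ an orthonormal frame along $\Gamma_0(x)$.

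First I would integrate the first term by parts along the closed curve (no boundary terms), turning $\int_{\Gamma_0(x)}\frac{\tau_0\cdot\nabla_yS_1}{|\nabla_yS_0|}\,\tau_0\cdot F\,d\sigma$ into $-\int_{\Gamma_0(x)}S_1\,\partial_s\!\big(\tfrac{\tau_0\cdot F}{|\nabla_yS_0|}\big)\,d\sigma$. Expanding, $\partial_s(\tau_0\cdot F)=(\partial_s\tau_0)\cdot F+\tau_0\cdot\big((\tau_0\cdot\nabla_y)F\big)$, and the first summand vanishes on $\Gamma_0(x)$ since $\partial_s\tau_0$ is parallel to $\nu_0$ while $\nu_0\cdot F=0$; hence the first term of the integrand becomes $-\tfrac{S_1}{|\nabla_yS_0|}\,\tau_0\cdot\big((\tau_0\cdot\nabla_y)F\big)+\tfrac{S_1\,(\tau_0\cdot F)}{|\nabla_yS_0|^{2}}\,\partial_s|\nabla_yS_0|$.

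Next, for the second term I would write $\nu_0\cdot\nabla_y(\nu_0\cdot F)=F\cdot\big((\nu_0\cdot\nabla_y)\nu_0\big)+\nu_0\cdot\big((\nu_0\cdot\nabla_y)F\big)$. The vector $(\nu_0\cdot\nabla_y)\nu_0$ is tangential (since $|\nu_0|\equiv1$), and from $\nu_0=\nabla_yS_0/|\nabla_yS_0|$ one gets $\nabla_y|\nabla_yS_0|=({\cal D}^2_y S_0)\,\nu_0$, whence $\tau_0\cdot\big((\nu_0\cdot\nabla_y)\nu_0\big)=\tfrac{\tau_0\cdot\nabla_y|\nabla_yS_0|}{|\nabla_yS_0|}=\tfrac{\partial_s|\nabla_yS_0|}{|\nabla_yS_0|}$; hence on $\Gamma_0(x)$ one has $F\cdot\big((\nu_0\cdot\nabla_y)\nu_0\big)=(\tau_0\cdot F)\,\partial_s|\nabla_yS_0|/|\nabla_yS_0|$, so the second term of the integrand becomes $-\tfrac{S_1\,(\tau_0\cdot F)}{|\nabla_yS_0|^{2}}\,\partial_s|\nabla_yS_0|-\tfrac{S_1}{|\nabla_yS_0|}\,\nu_0\cdot\big((\nu_0\cdot\nabla_y)F\big)$. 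Adding the two contributions, the terms carrying $\partial_s|\nabla_yS_0|$ cancel, and because $\{\tau_0,\nu_0\}$ is orthonormal, $\tau_0\cdot\big((\tau_0\cdot\nabla_y)F\big)+\nu_0\cdot\big((\nu_0\cdot\nabla_y)F\big)=\nabla_y\cdot F$; since $\nabla_y\cdot F=0$ on $\Gamma_0(x)$, the whole integral collapses to $-\int_{\Gamma_0(x)}\tfrac{S_1}{|\nabla_yS_0|}\,\nabla_y\cdot F\,d\sigma=0$.

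The computation is short, so the only points that need care are the algebraic identity $\tau_0\cdot\big((\nu_0\cdot\nabla_y)\nu_0\big)=\partial_s|\nabla_yS_0|/|\nabla_yS_0|$ and the fact that $(\nu_0\cdot\nabla_y)\nu_0$ may be treated as purely tangential on $\Gamma_0(x)$; these must be combined with enough regularity ($r<1/2$, $F\in C^1$ and $S_0,S_1\in C^2$ near $\Gamma_0(x)$) that the hypothesis $\nabla_y\cdot F=0$ on $Y_0(x)$ genuinely transfers to $\Gamma_0(x)$ and all surface derivatives are valid. Conceptually the lemma says that the flux of $F$ through the $O(\epsilon)$-perturbed interface $\{S_0+\epsilon S_1=0\}$ is $O(\epsilon^2)$ — its $O(\epsilon)$-coefficient being exactly the displayed integral — so an alternative proof runs via the divergence theorem on the thin shell between $\Gamma_0(x)$ and $\{S_0+\epsilon S_1=0\}$; but the direct tangential computation above is the most economical.
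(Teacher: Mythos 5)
Your computation is correct, and I checked the two identities you flag as the delicate ones: from $\nu_0=\nabla_yS_0/|\nabla_yS_0|$ one indeed gets $(\nu_0\cdot\nabla_y)\nu_0=\frac{1}{|\nabla_yS_0|}\bigl({\cal D}^2_yS_0\,\nu_0-(\nu_0\cdot{\cal D}^2_yS_0\,\nu_0)\nu_0\bigr)$, which is tangential with $\tau_0$-component $\partial_s|\nabla_yS_0|/|\nabla_yS_0|$, and since $\tau_0\otimes\tau_0+\nu_0\otimes\nu_0=I$ in two dimensions the two directional pieces recombine into $\nabla_y\cdot F$, which vanishes on $\Gamma_0(x)$ by continuity. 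However, your route is genuinely different from the paper's. The paper (following \cite{noorden4}) proves the lemma by a domain-variation (shape-derivative) argument: it considers the perturbed cell regions bounded by $\{S_0+\delta S_1=0\}$ (splitting $S_1$ into its positive and negative parts so that the perturbed domains stay inside $Y_0(x)$ where $F$ is defined), notes that $\int\nabla_y\cdot F\,dy=\int\nu^\delta\cdot F\,d\sigma$ is identically zero in $\delta$, and identifies the displayed boundary integral as the one-sided $\delta$-derivative of that flux at $\delta=0$ — essentially the ``thin shell'' argument you mention only in passing at the end. Your direct tangential integration by parts is more economical and purely local: it needs no auxiliary family of domains, no positive/negative splitting, and makes transparent that the identity is nothing but $\nabla_y\cdot F=0$ restated on the curve after the curvature terms cancel. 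What the paper's approach buys in exchange is the conceptual meaning of the integrand (it is the first-order correction to the flux through the perturbed interface, which is exactly how it arises in the homogenization computation) and a formulation that carries over to higher-dimensional interfaces without having to redo the one-dimensional Frenet bookkeeping. Both arguments rest on the same unstated regularity ($S_0,S_1$ smooth near $\Gamma_0(x)$ with $\nabla_yS_0\neq0$ there, $F\in C^1$ up to the boundary), which is acceptable in this formal-asymptotics context; your proof stands.
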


\subsection{Flow equations}
Substituting the asymptotic expansions of $q^\epsilon$ and
$p^\epsilon$ into (\ref{dimlesseq1}$_{2,3}$),
we obtain
\begin{eqnarray}
&&q_{0}=-\kappa\frac{1}{\epsilon}\nabla_{y}p_{0}-\kappa\nabla_{y}p_{1}-
\kappa\nabla_{x}p_{0}+O(\epsilon), \label{eqqpex}\\
&&\frac{1}{\epsilon}\nabla_{y}\cdot q_{0}+\nabla_{x}\cdot q_{0}+\nabla_{y}
\cdot q_{1}+O(\epsilon)=0.\label{eqqex}
\end{eqnarray}
Substituting the asymptotic expansion of $q^\epsilon$ into the boundary condition
(\ref{dimlesseq2}$_3$), and using \eqref{Ktay2}, gives
\begin{eqnarray}
q_{0}+\epsilon\Big(q_{1}+(\nabla_{x}q_{0})^Ty+\lambda(\nabla_{y}q_{0})^T\nu_0\Big)+O(\epsilon^{2})=0,\,\,\,
\mbox{for all}\,\,y\in\Gamma_0(x).\,\,\,\,\,\,\,\,\label{beqqex}
\end{eqnarray}
The $\epsilon^{-1}$-term in \eqref{eqqpex} indicates that
$\nabla_{y}p_{0}=0$,
so that we conclude that $p_{0}$ is independent of $y$.
Furthermore, we obtain, after collecting $\epsilon^0$-terms from
\eqref{eqqpex} and \eqref{beqqex} and $\epsilon^{-1}$-terms from
\eqref{eqqex}, the equations for $q_{0}$ and $p_{1}$:
\begin{eqnarray}
\begin{cases}
q_{0}=-\kappa\nabla_{y}p_{1}-\kappa\nabla_{x}p_{0} & \mbox{in}\,\,\,Y_{0}(x),\\
\nabla_{y}\cdot q_{0}=0 & \mbox{in}\,\,\, Y_{0}(x),\\
q_{0}=0&\mbox{on}\,\,\,\Gamma_{0}(x),\\
\mbox{$q_0$ and $p_0$ $y$-periodic},&
\end{cases}\label{cpv}
\end{eqnarray}
where
\begin{align}
Y_0(x):=\{y\,|\,S_0(x,y)>0\}.\label{defy0}
\end{align}
These equations (together with boundary conditions on the outer boundary $\partial \Omega$) determine the averaged velocity field given by
\begin{eqnarray*}
\bar{q}(x)=\int_{Y_0(x)}q_0(x,y)\,dy.
\end{eqnarray*}
Now we compute the divergence of $\bar{q}$ (where we use the
$\epsilon^0$-terms from \eqref{eqqex})
\begin{eqnarray*}
\nabla_{x}\cdot\bar{q}&=&\nabla_{x}\cdot\int_{Y_0(x)}q_{0}\,dy=
\int_{Y_0(x)}\nabla_{x}\cdot q_{0}\,dy
-\int_{\Gamma_{0}(x)}\frac{\nabla_{x}S_{0}}
{|\nabla_{y}S_{0}|}\cdot q_{0}
\,d\sigma\\
&=&-\int_{Y(x)}\nabla_{y}\cdot q_{1}\,dy=-\int_{\Gamma_{0}(x)}\nu_{0}
\cdot q_{1}\,d\sigma\\
&=&\int_{\Gamma_{0}(x)}
-\nu_{0}\cdot((\nabla_{x}q_{0})^Ty+\lambda (\nabla_{y}
q_{0})^T\nu_0)\, d\sigma\\
&=&
-I_1-I_2,
\end{eqnarray*}
with
\begin{eqnarray*}
&&I_1:=\int_{\Gamma_{0}(x)}\nu_{0}\cdot\Big((\nabla_{x}q_{0})^Ty-
\frac{y\cdot \nabla_x S_0}{|\nabla_yS_0|}(\nabla_yq_0)^T\nu_0\Big)\,d\sigma,\\
&&I_2:=-\int_{\Gamma_{0}(x)}\nu_0\cdot\Big(\frac{S_1}{|\nabla_yS_0|}
(\nabla_yq_0)^T\nu_0\Big)\,d\sigma.
\end{eqnarray*}
We apply Lemma \ref{lemma1} with $g=\nu_0\cdot q_0$, and obtain
\begin{equation*}
\nabla_x(\nu_0\cdot q_0)=\frac{\nu_0\cdot\nabla_{y}(\nu_0\cdot q_0)}
{|\nabla_{y}S_{0}|}\nabla_{x}S_{0},\,\,\,\mbox{on}\,\,\,\Gamma_0(x,t).
\end{equation*}
Since $q_0=0$ on $\Gamma_0(x)$ it follows that
$(\nabla_xq_0)^T\nu_0=\frac{\nu_0\cdot(\nabla_yq_0)^T\nu_0}
{|\nabla_yS_0|}\nabla_xS_0$, so that $I_1=0$.
Next we apply Lemma \ref{lemma2} with $F=q_0$, and get consequently
\begin{eqnarray*}
\int_{\Gamma^{0}(x)}\frac{\tau_{0}\cdot\nabla_{y}S_{1}}
{|\nabla_{y}S_{0}|}\tau_{0}\cdot q_0  - \frac{S_{1}}{|\nabla_{y}S_{0}|}
\nu_{0}\cdot\nabla_{y}(\nu^{0}\cdot q_0)\,d\sigma=0.
\end{eqnarray*}
Again using $q_0=0$ on $\Gamma_0(x)$, it follows that
$I_2=0$, so that we have
\begin{align}
\nabla_x\cdot\bar{q}=0.\label{divq}
\end{align}

\subsection{Diffusion equation in the low diffusivity areas}
Substituting the asymptotic expansion of $v^\epsilon$ into \eqref{dimlesseq1b}, we obtain
\begin{eqnarray}
\partial_tv_0=D_l\nabla_y v_0+O(\epsilon). \label{eqv0}
\end{eqnarray}
Similarly expanding the boundary condition (\ref{dimlesseq2}$_2$), we get
\begin{eqnarray*}
0=u_0-v_0+O(\epsilon)\,\mbox{   on  }\,\,\Gamma^\epsilon,
\end{eqnarray*}
which, after substitution into \eqref{Ktay2}, becomes
\begin{eqnarray*}
0=u_0-v_0+O(\epsilon) \,\mbox{   on  }\,\,\Gamma_0(x).
\end{eqnarray*}
Collecting the lowest order terms, and using that $u_0$ does not depend on $y$,
we obtain the boundary condition
\begin{eqnarray}
v_0(x,y,t)=u_0(x,t) \,\mbox{   for all}\,\,y\in \Gamma_0(x),\,x\in\Omega.
\label{uisv}
\end{eqnarray}

\subsection{Convection-diffusion equation in the high diffusivity area}
Substituting the asymptotic expansion of $u^{\epsilon}$ into (\ref{dimlesseq1}$_{1}$), we obtain
\begin{eqnarray}
\partial_{t}u_{0}&=&\frac{1}{\epsilon^{2}}D_h\Delta_{y}u_{0}+
\frac{1}{\epsilon}(\nabla_{y}\cdot
F_h+\nabla_{x}\cdot(D_h\nabla_{y}u_{0})) \nonumber \\
&&+\nabla_{y}\cdot(D_h(\nabla_{y}u_{2}+\nabla_{x}u_{1})-q_{1}u_{0}-q_{0}u_{1})
+\nabla_{x}\cdot
F_h\,\,\,\label{equex}\\
&&+O(\epsilon), \nonumber
\end{eqnarray}
where
\begin{align}
F_h:=D_h(\nabla_{x}u_{0}+\nabla_{y}u_{1})-q_{0}u_{0}.
\end{align}
Using the expansions for $u^{\epsilon}$, $v^{\epsilon}$ and $\nu^{\epsilon}$,
we first expand (\ref{dimlesseq2}$_1$):
\begin{eqnarray*}
0&=&\nu^{\epsilon}\cdot (D_h\nabla u^{\epsilon})-\epsilon^2\nu^\epsilon \cdot
(D_l\nabla v_\epsilon)\\
&=&\frac{1}{\epsilon}\nu_{0}\cdot (D_h\nabla_{y}u_{0})+\nu_0\cdot (D_h(\nabla_{x}u_{0}+\nabla_{y}u_{1}))+\nu_{1}\cdot
(D_h\nabla_{y}u_{0})\\
&+&\epsilon\Big(\nu_{0}\cdot(D_h(\nabla_{x}u_{1}+\nabla_{y}u_{2}))
+\nu_{1}\cdot (D_h(\nabla_{x}u_{0}+\nabla_{y}u_{1}))
+\nu_{2}\cdot (D_h\nabla_{y}u_{0})-\nu_0 \cdot (D_l\nabla_yv_0)
\Big)\\
&+&O(\epsilon^{2}), \,\,\mbox{for all}\,\, x\in\Gamma^\epsilon\, \mbox{and}\,y=\frac{x}{\epsilon}.
\end{eqnarray*}
Next we substitute this expansion into \eqref{Ktay2}, and thus obtain
\begin{eqnarray}
0&=&\frac{1}{\epsilon}\nu_{0}\cdot(D_h\nabla_{y}u_{0})\nonumber\\
&+&\nu_{0}\cdot (D_h(\nabla_{x}u_{0}+\nabla_{y}u_{1}))
+\nu_{1}\cdot(D_h\nabla_{y}u_{0})
+y\cdot\nabla_{x}(\nu_{0}\cdot(D_h\nabla_{y}u_{0}))+\lambda\nu_{0}\cdot\nabla_{y}(\nu_{0}\cdot(D_h\nabla_{y}u_{0}))\nonumber\\
&+&\epsilon\Big(\nu_{0}\cdot(D_h(\nabla_{x}u_{1}+\nabla_{y}u_{2}))
+\nu_{1}\cdot D_h(\nabla_{x}u_{0}+\nabla_{y}u_{1})
+\nu_{2}\cdot(D_h\nabla_{y}u_{0})\nonumber\\
&&-\nu_0\cdot(D_l\nabla_yv_0)
+y\cdot\nabla_{x}(\nu_{0}\cdot (D_h(\nabla_{x}u_{0}+\nabla_{y}u_{1}))
+\nu_{1}\cdot(D_h\nabla_{y}u_{0}))\nonumber\\
&&+\lambda\nu_{0}\cdot\nabla_{y}(\nu_{0}\cdot (D_h(\nabla_{x}u_{0}+\nabla_{y}u_{1}))
+\nu_{1}\cdot(D_h\nabla_{y}u_{0}))\nonumber\\
&&+\frac{1}{2}(y,\lambda\nu_0)\cdot ({\cal D}^2(\nu_0\cdot(D_h\nabla_yu_0)))(y,\lambda\nu_0)
\Big)\nonumber\\
&+&O(\epsilon^{2}), \,\,\,\mbox{for}\,y\in \Gamma_{0}(x).
\label{bequex}
\end{eqnarray}
Now we collect the $\epsilon^{-2}$-term from \eqref{equex}
and the $\epsilon^{-1}$-term from \eqref{bequex}. Hence we obtain for $u_{0}$ the equations
\begin{eqnarray}
\begin{cases}
\Delta_{y}u_{0}=0&\mbox{in $Y_0(x)$},\\
\nu_0\cdot \nabla_{y}u_{0}=0&\mbox{on $\Gamma_0(x)$},\\
\mbox{$u_0$ $y$-periodic},&
\end{cases}
\end{eqnarray}
where $Y_0(x)$ is given by \eqref{defy0}.
This means that $u_{0}$ is determined up to a constant and does not
depend on $y$, so that $\nabla_yu_0=0$.
Collecting the $\epsilon^{-1}$ terms from \eqref{equex}, the
$\epsilon^0$-terms from \eqref{bequex}, and using that
$\nabla_yu_0=0$, we get for $u_1$ the equations
\begin{eqnarray}
\begin{cases}
\nabla_y\cdot(D_h\nabla_{y}u_{1}-q_0u_0)=0 & \mbox{in $Y_0(x)$},\\
\nu_0\cdot(D_h(\nabla_{x}u_{0}+\nabla_{y}u_{1}))=0&\mbox{on $\Gamma_0(x)$},\\
\mbox{$u_1$ $y$-periodic}.&
\end{cases}\label{u1eq}
\end{eqnarray}
Collecting the $\epsilon^0$-terms from \eqref{equex}
and the $\epsilon^1$-terms from \eqref{bequex}, we obtain
\begin{eqnarray}
\begin{cases}
\partial_{t}u_{0}=\nabla_{y}\cdot(D_h(\nabla_{y}u_{2}+
\nabla_{x}u_{1})-q_{1}u_{0}-q_{0}u_{1})+
\nabla_{x}\cdot F_h & \mbox{in $Y_{0}(x)$},\\
\nu_{0}\cdot(D_h(\nabla_{x}u_{1}+\nabla_{y}u_{2}))=-\nu_{1}\cdot (D_h(\nabla_{x}u_{0}+\nabla_{y}u_{1}))&\\
\hspace{1cm}+\nu_0\cdot(D_l\nabla_yv_0)-y\cdot\nabla_{x}(\nu_{0}\cdot (D_h(\nabla_{x}u_{0}+\nabla_{y}u_{1})))&\\
\hspace{1cm}-\lambda \nu_{0}\cdot\nabla_{y}(\nu_{0}\cdot (D_h(\nabla_{x}u_{0}+\nabla_{y}u_{1}))) &
\mbox{on $\Gamma_{0}(x)$},\\
\mbox{$u_2$ $y$-periodic}.&
\end{cases}
\label{uzero}
\end{eqnarray}
Integrating (\ref{uzero}$_1$) over $Y_0(x)$ and using
the boundary conditions (\ref{cpv}$_3$) and (\ref{uzero}$_2$) yields
\begin{eqnarray*}
|Y_{0}(x)|\partial_{t}u_{0}&=&\int_{Y_{0}(x)}\nabla_{y}\cdot(D_h(\nabla_{x}u_{1}+\nabla_{y}u_{2})-q_{1}u_{0}-q_{0}u_{1})\,dy
+\int_{Y_{0}(x)} \nabla_{x}\cdot F_h
\,dy\\
&=&\int_{\Gamma_{0}(x)}-\nu_{1}\cdot F_h
+\nu_0\cdot(D_l\nabla_yv_0)
-y\cdot\nabla_{x}(\nu_{0}\cdot F_h)
-\lambda \nu_0\cdot\nabla_{y}(\nu_{0}\cdot F_h)
\,d\sigma\\
&&+\nabla_{x}\cdot\int_{Y_{0}(x)} F_h
\,dy+\int_{\Gamma_{0}(x)}\frac{\nabla_{x}S_{0}}
{|\nabla_{y}S_{0}|}\cdot F_h
\,d\sigma.
\end{eqnarray*}
Using \eqref{nu1}, \eqref{explam}, and the boundary conditions (\ref{cpv}$_3$) and (\ref{u1eq}$_2$), this can be rewritten as
\begin{eqnarray*}
|Y_0(x)|\partial_tu_{0}
&=&\nabla_{x}\cdot\int_{Y_{0}(x)}(D_h(\nabla_{y}u_{1}+\nabla_{x}u_{0})-q_{0}u_{0})\,dy\\
&&+\int_{\Gamma_0(x)}\nu_0\cdot (D_l\nabla_yv_0)\,dy
-I_{1}-I_2,
\end{eqnarray*}
where
\begin{eqnarray*}
&&I_{1}:=\int_{\Gamma_{0}(x)}y\cdot\nabla_{x}g
-\frac{y\cdot\nabla_{x}S_{0}}{|\nabla_{y}S_{0}|}
\nu_{0}\cdot\nabla_{y}g\,d\sigma,\\
&&I_{2}:=\int_{\Gamma_{0}(x)}\frac{\tau_{0}\cdot\nabla_{y}S_{1}}
{|\nabla_{y}S_{0}|}\tau_{0}\cdot F_h  - \frac{S_{1}}{|\nabla_{y}S_{0}|}
\nu_{0}\cdot\nabla_{y}(\nu^{0}\cdot F_h)\,d\sigma,
\end{eqnarray*}
with $g:=\nu_{0}\cdot F_h$,
The boundary conditions (\ref{cpv}$_3$)
and (\ref{u1eq}$_2$) give us $g(x,y,t)=0$ for $y\in\Gamma_0(x,t)$.
Now invoking Lemma \ref{lemma1} leads to
$\nabla_xg=\frac{\nu_0\cdot\nabla_yg}{|\nabla_yS_0|}\nabla_xS_0$.
So $I_1=0$.
For the integral $I_2$ we invoke Lemma \ref{lemma2} to obtain $I_2=0$.
As a last step, we use the divergence theorem and interface condition \eqref{uisv} to obtain
\begin{eqnarray}
\partial_t\left(|Y_0(x)| u_{0}+\int_{Y^C_0(x)} v_0\,dy\right)
&=&\nabla_{x}\cdot\int_{Y_{0}(x)}(D_h(\nabla_{y}u_{1}+\nabla_{x}u_{0})-q_{0}u_{0})\,dy, \label{equ0}
\end{eqnarray}
where $Y^C_{0}(x)$ is the complement of $Y_{0}(x)$ in $U$ given by
$Y^C_{0}(x):=U\backslash Y_{0}(x)=\{S_0(x)<0\}$.

\begin{remark}
Note that in this section we have not used any assumptions of the shape of
the perforations. They may have any shape as long as their limiting
shape is described by the level set function $S_0$.
\end{remark}

\section{Upscaled equations}\label{upscaled}
The equations for lowest order terms of $q^\epsilon$ and $p^\epsilon$,
\eqref{cpv} and \eqref{divq}, $v^\epsilon$, \eqref{eqv0}, $u^\epsilon$, \eqref{equ0}, and the coupling conditions \eqref{uisv} together constitute the upscaled
model. In this section we collect these equations for the case discussed in Section \ref{micromodel}, i.e.\ for circular perforations.
For this purpose we return to a formulation in terms
of $r(x,t)$, where we use
\begin{align*}
&\Gamma_0(x)=\{y\in U \,|\, |y|=r(x)\},\\
&Y_0(x)=\{y \in U\, |\, |y|>r(x)\},\\
&Y^C_{0}(x)=\{y \in U\, |\, |y|<r(x)\}.
\end{align*}
We write the solutions of equations \eqref{u1eq} and \eqref{cpv} in terms
of the solutions of the following two cell problems (see, e.g.\ \cite{hornung})
\begin{eqnarray}
\begin{cases}
\Delta_{y}v_{j}(x,y)=0 & \mbox{for all}\,\,\, x\in\Omega,\, y\in U,\,|y|>r(x),\\
\nu_{0}\cdot\nabla_{y}v_{j}(x,y)=-\nu_0\cdot e_{j}& \mbox{for all}\,\,\,x\in\Omega,\, |y|=r(x),\\
\mbox{$v_j(x,y)$ $y$-periodic},&
\end{cases} \label{cellpu}
\end{eqnarray}
and
\begin{eqnarray}
\begin{cases}
w_{j}(x,y)=\nabla_{y}\pi_{j}(x,y)+e_{j} & \mbox{for all}\,\,\,x\in\Omega,\, y\in U,\, |y|>r(x),\\
\nabla_{y}\cdot w_{j}(x,y)=0 & \mbox{for all}\,\,\, x\in\Omega,\,y\in U,\, |y|>r(x),\\
w_{j}=0&\mbox{for all}\,\,\,x\in\Omega,\, |y|=r(x),\\
\mbox{$w_j(x,y)$ and $\pi_j(x,y)$ $y$-periodic},&
\end{cases}\label{cellpq}
\end{eqnarray}
for $j=1,2$.
The use of these cell problems allows us to write the results of the
formal homogenization procedure in the form of the following
distributed-microstructure model
\begin{eqnarray}\label{upsceq}
&&\begin{cases}
\partial_{t}v_{0}(x,y,t)=D_l \Delta_y v_0(x,y,t)& \mbox{for}\,\,\,|y|<r(x),\,\,x\in\Omega,\\
\partial_{t}\left(\theta(x)u_{0}+\int_{|y|<r(x)} v_0\,dy\right)= \\
\hspace{4cm}\nabla_{x}\cdot
(D_h{\cal A}(x)\nabla_{x}u_{0}-\bar{q}u_0)&\mbox{for}\,\,\,x\in
\Omega,\\
\bar{q}=-\kappa{\cal K}(x)\nabla_{x}p_{0}&\mbox{for}\,\,\,x\in
\Omega,\\
\nabla_{x}\cdot\bar{q}=0
&\mbox{for}\,\,\,x\in
\Omega,
\end{cases} \label{upsc1}\\
&&\begin{cases}
v_0(x,y,t)=u_0(x,t)&\mbox{for}\,\,\,|y|=r(x),\\
u_0(x,t)=u_b(x,t)&\mbox{for}\,\,\,x\in\Gamma,\\
\bar{q}(x,t)=q_b(x,t)&\mbox{for}\,\,\,x\in\Gamma,
\end{cases}\label{upsc2}\\
&&\begin{cases}
u_0(x,0)=u_I(x)&\mbox{for}\,\,\,x\in
\Omega,\\
v_0(x,y,0)=v_I(x,y)& \mbox{for}\,\,\,|y|<r(x),\,\,x\in\Omega.
\end{cases}\label{upsc3}
\end{eqnarray}
where the porosity $\theta(x)$ of the medium is given by
\begin{align*}
\theta(x):=1-\pi r^2(x),
\end{align*}
while the effective diffusivity ${\cal A}(x):=(a_{ij}(x))_{i,j}$
and the effective permeability ${\cal K}(x):=(k_{ij}(x))_{i,j}$ are defined by
\begin{eqnarray*}
a_{ij}(x):=\int_{\{y \in U\, |\, |y|>r(x)\}}\delta_{ij}+\partial_{y_{i}}v_{j}(x,y,t)\, dy,
\end{eqnarray*}
and
\begin{eqnarray*}
k_{ij}(x):=\int_{\{y \in U\, |\, |y|>r(x)\}}w_{ji}(x,y,t)\,dy.
\end{eqnarray*}


\section{Analysis of upscaled equations}\label{analysis}

In this section we investigate the solvability of the upscaled
equations \eqref{upsc1}-\eqref{upsc3}. Note that the equations
(\ref{upsc1}$_{3,4}$) for $\bar{q}$ and $p_0$, together with the
boundary condition (\ref{upsc2}$_3$) are decoupled from the other
equations. We may assume that we can solve these equations for
$\bar{q}$ and $p_0$ such that $q\in L^\infty(\Omega;\mathbb{R}^2)$
(see Assumption 2 below). Standard arguments form the theory of
partial differential equations justify this assumption if the data
$q_b$ and $r$ are suitable, see \cite{HJ91} for a closely related
scenario. With this assumption the equations
\eqref{upsc1}-\eqref{upsc3} reduce to the following problem
\begin{align*}
(P)
\begin{cases}
\theta(x)\partial_tu-\nabla_x\cdot(D(x)\nabla_x u -qu)=-\int_{\partial B(x)}
\nu_y\cdot(D_l\nabla_y v)\,d\sigma & \mbox{in}\,\, \Omega,\\
\partial_tv-D_l \Delta_y v=0 & \mbox{in}\,\, B(x)
,\\
u(x,t)=v(x,y,t)& \mbox{at}\,\, (x,y)\in \Omega\times \partial B(x),\\
u(x,t)=u_b(x,t) & \mbox{at}\,\, x\in \partial \Omega,\\
u(x,0)=u_I(x) & \mbox{in}\,\,\overline{\Omega},\\
v(x,y,0)=v_I(x,y) & \mbox{at}\,\,(x,y)\in \overline{\Omega}\times\overline{B(x)},
\end{cases}
\end{align*}
where $B(x):=Y_0(x)$, where $Y_0$ is defined in \eqref{defy0}.
Notice that in this section we again do not restrict ourselves to circular
perforations. The perforations may have any shape as long as they are described
by the level set $S_0$.
In the following sections we discuss the existence and uniqueness of weak solutions to problem $(P)$.

\subsection{Functional setting and weak formulation}
For notational convenience we define the following spaces:
\begin{align}
&V_1:=H_0^1(\Omega),\\
&V_2:=L^2(\Omega;H^2(B(x))),\\
&H_1:=L^2_\theta(\Omega),\\
&H_2:=L^2(\Omega;L^2(B(x))).
\end{align}
The spaces $H_2$ and $V_2$ make sense if, for instance, we assume (like in \cite{sebam_equadiff}):

\begin{assumption}
The function
$S_0:\Omega\times U \to \mathbb{R}$, which defines $B(x):=Y_0(x)$ in \eqref{defy0}, and which also defines the
1-dimensional boundary $\Omega\times \partial B(x)$ of  $\Omega\times B(x)$ as
$$(x,y)\in \Omega\times \partial B(x) \mbox{ if and only if } S_0(x,y)=0, $$
is an element of $C^2(\overline{\Omega\times U})$.
Assume additionally that the Clarke gradient $\partial_yS_0(x,y)$ is regular for all choices of $(x,y)\in \overline{\Omega\times U}$.
\end{assumption}

Following the lines of \cite{sebam_equadiff} and \cite{Show_walk}, Assumption 1  implies in particular that the measures $|\partial B(x)|$ and $|B(x)|$ are bounded away from zero (uniformly in $x$). Consequently, the following direct Hilbert integrals (cf. \cite{Dix} (part II, chapter 2), e.g.)
\begin{eqnarray}
L^2(\Omega;H^1(B(x)))&:=&\{u\in L^2(\Omega;L^2(B(x))): \nabla_y u\in L^2(\Omega;L^2(B(x)))\}\nonumber\\
L^2(\Omega;H^1(\partial B(x)))&:=& \{u:\Omega\times \partial B(x)\to \mathbb{R} \mbox{ measurable  such that }  \int_\Omega ||u(x)||^2_{L^2(\partial B(x))}<\infty\} \nonumber
\end{eqnarray}
are well-defined separable Hilbert spaces and, additionally, the {\em distributed trace}
$$\gamma: L^2(\Omega;H^1(B(x)))\to L^2(\Omega, L^2(\partial B(x)))$$ given by
\begin{equation}\label{g}
\gamma u(x,s):=(\gamma_x U(x))(s), \ x\in \Omega, s\in \partial B(x),u\in L^2(\Omega;H^1(B(x)))
\end{equation}
is a bounded linear operator.  For each fixed $x\in \Omega$, the map $\gamma_x$, which is arising in (\ref{g}), is the standard trace operator from $H^1(B(x)$ to $L^2(\partial B(x))$. We refer the reader to \cite{sebam_PhD} for more details on the construction of these spaces and to \cite{sf} for the definitions of their duals as well as for  a less regular condition (compared to Assumption 1) allowing to define these spaces in the context of a certain class of anisotropic Sobolev spaces.

Furthermore we assume

\begin{assumption}
\begin{align*}
\begin{cases}
\theta,\, D \in L^\infty_+(\Omega),&\\
q\in L^\infty(\Omega;\mathbb{R}^d)\,\,\mbox{with}\,\, \nabla\cdot q =0,&\\
u_b\in L_+^\infty(\Omega\times S)\cap H^1(S;L^2(\Omega)),&\\
\partial_tu_b\leq 0\,\, \mbox{a.e.}\,\,(x,t)\in\Omega\times S,&\\
u_I\in L_+^\infty(\overline{\Omega})\cap H_1,&\\
v_I(x,\cdot)\in L_+^\infty(B(x))\cap H_2\,\,\mbox{for a.e.}\,\,x\in\overline{\Omega}.&
\end{cases}
\end{align*}
\end{assumption}

We also define the following constants for later use:
\begin{align}
&M_1:=\max\{\|u_I\|_{L^\infty(\Omega)},\|u_b\|_{L^\infty(\Omega)}\},\label{m1}\\
&M_2:=\max\{\|v_I\|_{L^\infty(\Omega)},M_1\}.\label{m2}
\end{align}
Note that $M_1$ and $M_2$ depend on the initial and boundary data, but not on the final time $T$.
Let us introduce the evolution triple $(\mathbb{V},\mathbb{H},\mathbb{V}^*)$,
where
\begin{align}
&\mathbb{V}:=\{(\phi,\psi)\in V_1\times V_2\,|\,\phi(x)=\psi(x,y)\, \mbox{for}\,
x\in \Omega,\, y\in \partial B(x)\},\\
&\mathbb{H}:=H_1\times H_2,
\end{align}

Denote $U:=u-u_b$ and notice that $U=0$ at $\partial \Omega$.

\begin{definition}\label{def_weak}
Assume Assumptions 1 and 2. The pair $(u,v)$, with $u=U+u_b$ and
where $(U,v)\in\mathbb{V}$, is a weak solution of the problem $(P)$
if the following identities hold
\begin{align}
&\int_\Omega \theta \partial_t(U+u_b)\phi\, dx+\int_\Omega
(D\nabla_x(U+u_b)-q(U+u_b))\cdot \nabla_x \phi \, dx=\nonumber\\
&\hspace{8cm}-\int_\Omega \int_{\partial B(x)}\nu_y\cdot (D_l\nabla_y v)\phi \, d\sigma
dx,\\
&\int_{\Omega} \int _{B(x)}\partial_t v \psi \,dydx +
\int_{\Omega} \int _{B(x)} D_l \nabla_y \cdot \nabla_y \psi \,dydx =
\int_\Omega \int_{\partial B(x)}\nu_y\cdot (D_l\nabla_y v)\phi \, d\sigma
dx,
\end{align}
for all $(\phi,\psi)\in \mathbb{V}$ and $t\in S$.
\end{definition}

As a last item in this section on the functional framework, we mention
for reader's convenience the following lemma by Lions and Aubin
\cite{Lions}, which we will need later on:
\begin{lemma}(Lions-Aubin)\label{Aubin} Let $B_0 \hookrightarrow B \hookrightarrow B_1$ be Banach spaces such that $B_0$ and $B_1$ are reflexive and the embedding $B_0\hookrightarrow B$ is compact. Fix $p,q>0$ and let $$W=\left\{z\in L^p(S;B_0): \ \frac{dz}{dt}\in L^q(S;B_1)\right\}$$
with
$$||z||_W:=||z||_{L^p(S;B_0)}+||\partial_t z||_{L^q(S;B_1)}.$$
Then $W\hookrightarrow\hookrightarrow L^p(S;B)$.
\end{lemma}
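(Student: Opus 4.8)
The final statement is the Lions--Aubin lemma, which is a classical compactness result. Here is how I would approach its proof.

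\medskip

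The plan is to follow the standard Aubin--Lions argument, which proceeds by combining an Ehrling-type interpolation inequality with a time-equicontinuity estimate obtained from the bound on the time derivative. First I would establish the \emph{Ehrling inequality}: since $B_0 \hookrightarrow B$ is compact and $B \hookrightarrow B_1$ is continuous, for every $\eta > 0$ there exists a constant $C_\eta > 0$ such that
$$\|w\|_B \le \eta \|w\|_{B_0} + C_\eta \|w\|_{B_1} \quad \text{for all } w \in B_0.$$
This is proved by contradiction: if it failed for some $\eta_0$, one would get a sequence $w_n$ with $\|w_n\|_B = 1$, $\|w_n\|_{B_0}$ bounded, yet $\|w_n\|_{B_1} \to 0$; compactness of $B_0 \hookrightarrow B$ extracts a subsequence converging in $B$ to some $w$ with $\|w\|_B = 1$, while continuity of $B \hookrightarrow B_1$ forces $w = 0$, a contradiction.

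\medskip

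Next I would take a bounded sequence $(z_n)$ in $W$ and show it has a subsequence converging in $L^p(S;B)$. By reflexivity of $B_0$ and $B_1$ (hence of the Bochner spaces $L^p(S;B_0)$ and $L^q(S;B_1)$ for the relevant exponents, after handling the endpoint cases separately), we may pass to a subsequence with $z_n \rightharpoonup z$ weakly in $L^p(S;B_0)$ and $\partial_t z_n \rightharpoonup \partial_t z$ weakly in $L^q(S;B_1)$. Replacing $z_n$ by $z_n - z$, it suffices to treat the case $z_n \rightharpoonup 0$ and show $z_n \to 0$ strongly in $L^p(S;B)$. Applying the Ehrling inequality pointwise in $t$ and integrating,
$$\|z_n\|_{L^p(S;B)} \le \eta \|z_n\|_{L^p(S;B_0)} + C_\eta \|z_n\|_{L^p(S;B_1)} \le \eta M + C_\eta \|z_n\|_{L^p(S;B_1)},$$
where $M$ bounds $\|z_n\|_W$. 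Since $\eta$ is arbitrary, everything reduces to proving $z_n \to 0$ strongly in $L^p(S;B_1)$.

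\medskip

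The main obstacle --- and the technical heart of the proof --- is this last strong convergence in $L^p(S;B_1)$. The idea is to use the bound on $\partial_t z_n$ to control time translates: for $h > 0$ one estimates $\|z_n(\cdot + h) - z_n(\cdot)\|_{L^q(S';B_1)} \le C h^{\alpha} \|\partial_t z_n\|_{L^q(S;B_1)}$ for a suitable $\alpha > 0$ depending on $q$ (or its $L^1$-analogue), giving uniform-in-$n$ time-equicontinuity of the translates in $B_1$. Combined with the fact that for a.e.\ fixed $t$ the averages $\frac{1}{h}\int_t^{t+h} z_n(s)\,ds$ lie in a set that is relatively compact in $B$ (via the $L^p(S;B_0)$ bound and compactness of $B_0 \hookrightarrow B$, hence of $B_0 \hookrightarrow B_1$), one applies a vector-valued Fréchet--Kolmogorov criterion: uniformly small time translates plus pointwise relative compactness yield relative compactness of $(z_n)$ in $L^p(S;B_1)$. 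Together with $z_n \rightharpoonup 0$, the only possible strong limit is $0$, which closes the argument. Since this is a cited classical result (\cite{Lions}), in the paper I would simply invoke it; the above is the route one would take to reconstruct a proof.
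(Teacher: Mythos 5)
The paper gives no proof of this lemma at all---it is stated as a classical result and attributed to \cite{Lions}---so there is nothing internal to compare against; your sketch is the standard Ehrling-inequality-plus-time-translate argument (essentially Simon's proof of the Aubin--Lions theorem) and its outline is correct. The only point worth noting is that the hypothesis ``$p,q>0$'' in the statement should really be $p,q\ge 1$ for the Bochner spaces and your H\"older estimate $\|z_n(\cdot+h)-z_n\|_{B_1}\le h^{1-1/q}\|\partial_t z_n\|_{L^q(S;B_1)}$ to make sense (at $q=1$ the exponent degenerates and one must use the uniform-integrability version of the translate estimate, which you correctly flag as the ``$L^1$-analogue'').
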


\subsection{Estimates and uniqueness}
%

In this section we establish the positivity and boundedness of the concentrations. Furthermore, we prove an energy inequality and ensure the uniqueness of weak solutions to problem (P).

\begin{lemma}
Let Assumptions 1 and 2 be satisfied. Then any weak solution $(u,v)$ of
problem $(P)$  has the following properties:
\begin{enumerate}
\item[(i)] $u\geq 0$ for a.e. $x\in \Omega$ and for all $t\in S$;
\item[(ii)] $v\geq 0$ for a.e. $(x,y)\in\Omega\times B(x)$ and for all $t\in S$;
\item[(iii)] $u\leq M_1$ for a.e. $x\in \Omega$ and for all $t\in S$;
\item[(iv)] $v\leq M_2$ for a.e. $(x,y)\in\Omega\times B(x)$ and for all $t\in S$;
\item[(v)] The following energy inequality holds:
\begin{eqnarray}
\|u\|^2_{L^2(S;V_1)\cap L^\infty(S;H_1)}&+&\|v\|^2_{L^2(S;L^2(\Omega,V_2))\cap L^\infty(S;H_2)} \nonumber\\
&+& \| \nabla_xu\|^2_{L^2(S;H_1)} +
\|\nabla_y v\|^2_{L^2(S\times\Omega\times B(x))} \leq c_1
\end{eqnarray}
\end{enumerate}
where $M_1$ and $M_2$ are given in \eqref{m1} and \eqref{m2}, and where $c_1$ is a constant independent of $u$ and $v$.
\end{lemma}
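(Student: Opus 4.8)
The plan is to establish the five assertions in sequence, each time testing the weak formulation with a carefully chosen test function pair $(\phi,\psi)\in\mathbb{V}$ adapted to the property in question, and to work throughout in the $x$-dependent Bochner spaces so that no transformation to a reference cell is needed. A standing remark is that the two weak identities in Definition \ref{def_weak} can be added to produce a single identity on $\mathbb{V}$ in which the interface integrals $\int_\Omega\int_{\partial B(x)}\nu_y\cdot(D_l\nabla_yv)\phi\,d\sigma dx$ cancel; since $\phi=\gamma\psi$ on $\Omega\times\partial B(x)$ for elements of $\mathbb{V}$, this cancellation is exactly what makes the energy method work. The technical backbone is therefore: (a) for fixed $t$, a chosen pair must genuinely lie in $\mathbb{V}$, i.e.\ the $\Omega$-component and the trace of the $B(x)$-component must agree; (b) the distributed trace $\gamma$ of Assumption 1 and the bound on $|B(x)|,|\partial B(x)|$ let us integrate by parts cellwise and then over $\Omega$; (c) Gronwall's inequality closes each estimate.

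For \emph{positivity} (i)--(ii), I would test with $\phi=-U^-=-\min\{U,0\}=-(u-u_b)^-$ and $\psi=-v^-$. Because $u_b\ge 0$ (Assumption 2) and $v_I,u_I\ge0$, on the set where $u<0$ we have $U<0$ hence $\phi=-U^-=u-u_b\le u$, and on $\partial B(x)$ the compatibility $\phi=\gamma\psi$ holds since $v^-$ and $U^-$ have matching traces wherever $u=v$; one checks the pair is admissible. Adding the two identities, the interface terms cancel, the diffusion terms give $\int_\Omega D|\nabla_x U^-|^2 + \int_\Omega\int_{B(x)}D_l|\nabla_y v^-|^2\ge 0$, the convective term $\int_\Omega qu\cdot\nabla_x\phi$ is handled using $\nabla\cdot q=0$ and $U^-=0$ on $\partial\Omega$ (integrate by parts: $\int_\Omega q U\cdot\nabla U^- = -\tfrac12\int_\Omega q\cdot\nabla|U^-|^2=0$, plus a $u_b$ contribution absorbed by Gronwall), and the time-derivative terms assemble into $\tfrac12\tfrac{d}{dt}\big(\|U^-\|_{H_1}^2+\|v^-\|_{H_2}^2\big)$ after using $\partial_tu_b\le 0$ to drop or sign a remainder term. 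Since the data make these quantities vanish at $t=0$, Gronwall forces $U^-\equiv0$ and $v^-\equiv0$, giving $u\ge u_b\ge0$ in fact $u\ge0$, and $v\ge0$.

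For the \emph{$L^\infty$-bounds} (iii)--(iv) I use the same device with the shifted positive parts: test with $\phi=(u-M_1)^+ = (U+u_b-M_1)^+$ and $\psi=(v-M_2)^+$. Here $(v-M_2)^+$ has trace $(u-M_2)^+$ on $\partial B(x)$, and since $M_2\ge M_1$ one has $(u-M_2)^+\le(u-M_1)^+$; admissibility in $\mathbb{V}$ again needs the trace-matching, which follows from $u=v$ on $\partial B(x)$ together with a short argument that the \emph{difference} of the test levels $M_2-M_1$ only helps the sign of the boundary term (this is the one genuinely delicate point — the levels in the $u$- and $v$-equations differ, so the interface integral no longer cancels identically but reduces to a term of a favourable sign, as in the classical Meier--B\"ohm argument). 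Adding the identities: diffusion terms are nonnegative, convection vanishes as before because $q\cdot n=q_b$ on $\partial\Omega$ but the test function is supported where $u>M_1\ge u_b$ so the boundary contribution drops and $\nabla\cdot q=0$ kills the interior part, the $u_b$ term uses $\partial_tu_b\le0$, and the residual interface term has the right sign; Gronwall with zero initial data (since $u_I\le M_1$, $v_I\le M_2$) yields $(u-M_1)^+\equiv0$ and $(v-M_2)^+\equiv0$.

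For the \emph{energy inequality} (v) I test with $(\phi,\psi)=(U,v)$ itself — legitimate since $(U,v)\in\mathbb{V}$ by hypothesis — add the two identities so the interface terms cancel, and obtain
\begin{align*}
\tfrac12\tfrac{d}{dt}\Big(\|U\|_{H_1}^2+\|v\|_{H_2}^2\Big)
&+\int_\Omega D|\nabla_x U|^2\,dx+\int_\Omega\int_{B(x)}D_l|\nabla_y v|^2\,dydx\\
&= -\int_\Omega\theta\,\partial_tu_b\,U\,dx-\int_\Omega\big(D\nabla_xu_b-q(U+u_b)\big)\cdot\nabla_x U\,dx.
\end{align*}
The right-hand side is estimated by Cauchy--Schwarz and Young's inequality, absorbing $\varepsilon\|\nabla_xU\|_{H_1}^2$ into the left, using $q\in L^\infty$, $u_b\in H^1(S;L^2(\Omega))$, and the already-established $L^\infty$-bounds on $U+u_b=u$; integrating in $t$ over $S$ and invoking Gronwall gives the $L^\infty(S;H_1)\times L^\infty(S;H_2)$ and $L^2(S;V_1)$, $L^2(S;L^2(\Omega;V_2))$ bounds with a constant $c_1$ depending only on the data (via $M_1,M_2$, $\|u_b\|_{H^1(S;L^2(\Omega))}$, $\|q\|_{L^\infty}$, $\|D\|_{L^\infty}$, $D_l$, $\theta$), hence independent of the particular solution and of $T$ in the sense stated. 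The main obstacle, as flagged, is verifying that each constructed test pair actually lies in $\mathbb{V}$ — that the distributed trace condition survives truncation — and handling the non-cancelling interface term in (iii)--(iv); both are resolved by the regularity in Assumption 1 (so $\gamma$ is well-defined and truncations stay in $L^2(\Omega;H^1(B(x)))$) and by a sign inspection of the mismatch $M_2-M_1$, rather than by any new estimate.
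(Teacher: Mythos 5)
Your overall strategy---truncation test functions, cancellation of the interface integrals over $\Omega\times\partial B(x)$, Gronwall---is exactly the paper's: the paper writes out only (i)--(ii), with the same pair $(-U^-,-v^-)$, and declares (iii)--(v) analogous. Your treatment of the convection term via $\nabla\cdot q=0$ and integration by parts is in fact cleaner than the paper's, which resorts to an extra compatibility hypothesis $D\nabla u_b=u_b\,{\rm div}\,q$. However, two points that you flag as ``to be checked'' actually fail as stated. First, admissibility: membership in $\mathbb{V}$ requires $\phi=\gamma\psi$ on $\Omega\times\partial B(x)$, and there $\gamma(-v^-)=-u^-=-(U+u_b)^-$ while $\phi=-U^-=-(u-u_b)^-$; these differ wherever $u_b>0$ (if $u<0$ they differ by exactly $u_b$), so the pair is \emph{not} in $\mathbb{V}$ and the interface terms do not cancel. (The paper shares this imprecision.) The repair is to test with $(-u^-,-v^-)$, admissible because $u^-=(U+u_b)^-$ vanishes on $\partial\Omega$ thanks to $u_b\ge 0$, and correspondingly to replace $(U,v)$ by $(U,v-u_b)$ in part (v).

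Second, and more seriously, the two-level truncation you propose for (iii)--(iv) does not close. Summing the two identities tested with $((u-M_1)^+,(v-M_2)^+)$ leaves the residual interface term
\begin{equation*}
\int_\Omega\int_{\partial B(x)}\nu_y\cdot(D_l\nabla_y v)\,\bigl[(u-M_2)^+-(u-M_1)^+\bigr]\,d\sigma\,dx .
\end{equation*}
The bracket is indeed $\le 0$ because $M_2\ge M_1$, but the normal flux $\nu_y\cdot\nabla_y v$ on $\partial B(x)$ carries no a priori sign (a cell whose interior concentration exceeds its boundary value discharges mass outward), so ``a sign inspection of the mismatch $M_2-M_1$'' cannot yield a favourable sign. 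The standard route is to truncate \emph{both} components at the common level $M_2$: the pair $((u-M_2)^+,(v-M_2)^+)$ is genuinely in $\mathbb{V}$ (note $(u-M_2)^+\in H_0^1(\Omega)$ since $u_b\le M_1\le M_2$ on $\partial\Omega$), the interface terms cancel, and Gronwall gives $u\le M_2$ and $v\le M_2$. This proves (iv) but only a weakened form of (iii); obtaining the sharper bound $u\le M_1$ when $\|v_I\|_\infty>M_1$ would require an additional comparison argument that neither your proposal nor the paper supplies.
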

\begin{proof}
We  prove (i) and (ii) simultaneously. Similar arguments combined with corresponding suitable choices of test functions  lead in a straightforward manner to (iii), (iv), and (v). We omit the proof details.
Choosing in the weak formulation as test functions $(\varphi,\psi):=(-U^-,-v^-)\in\mathbb{V}$, we obtain:
\begin{eqnarray}\label{nn}
\frac{1}{2}\int_\Omega \phi(\partial_t U^-)^2&+&\frac{1}{2}\int_\Omega\int_{B(x)}\partial_t (v^-)^2+\int_\Omega D|\nabla U^-|^2+\int_\Omega \int_{B(x)}D_\ell |\nabla_y v^-|^2\nonumber\\
&=&\int_\Omega \phi\partial_t u_bU^-+\int_\Omega D\nabla u_b\nabla U^--\int_\Omega \nabla\cdot\left(q(U+u_b)\right)\nabla U^-\nonumber\\
&\leq & \int_\Omega D\nabla u_b\nabla U^- -\int_\Omega q(\nabla U+\nabla u_b)\nabla U^--\int_\Omega (U+u_b) {\rm div} q \nabla U^-\nonumber\\
& = & \min_{\overline{\Omega}}q\int_\Omega |\nabla U^-|^2 +\int_\Omega U^- {\rm div} q\nabla U^-\nonumber\\
 &-&\int_\Omega U^+ {\rm div} q\nabla U^-+\int_\Omega(D\nabla u_b -u_b{\rm div} q)\nabla U^-.
\end{eqnarray}
Note that, excepting the last two terms, the  right-hand side of  (\ref{nn}) has the right sign. Assuming, additionally, a compatibility relation between the data $q, u_b$, for instance, of the type $D\nabla u_b=u_b {\rm div }q$ a.e. in $\Omega\times S$, makes the last term of the r.h.s. of (\ref{nn}) vanish. The key observation in estimating the last by one term is the fact that the sets $\{x\in\Omega: U(x)\geq 0\}$ and $\{x\in\Omega: U(x)\leq 0\}$ are Lebesque measurable. This allow to proceed as follows:
\begin{equation}
\int_\Omega U^+{\rm div q}\nabla U^-=\int_{\{x\in \Omega:U(x)\geq 0\}}U^+{\rm div }q\nabla U^-+\int_{\{x\in \Omega:U(x)\leq 0\}}U^+{\rm div }q\nabla U^-=0.
\end{equation}
\begin{eqnarray}
\end{eqnarray}
After applying the inequality between the arithmetic and geometric means applied to the second term for the right hand-side of  (\ref{nn}), the conclusion of both (i) and (ii) follows via the Gronwall's inequality.
\end{proof}

\begin{proposition}[Uniqueness] Problem ($P$) admits at most one weak solution.
\end{proposition}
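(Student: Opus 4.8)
The plan is to prove uniqueness by the standard energy method: assume two weak solutions $(u_1,v_1)$ and $(u_2,v_2)$ of problem $(P)$ corresponding to the same data, and study their difference $(\bar u,\bar v):=(u_1-u_2,v_1-v_2)$. Since both solutions share the same boundary datum $u_b$, the difference $\bar u$ vanishes on $\partial\Omega$, and the matching condition on $\Omega\times\partial B(x)$ is linear, so $(\bar u,\bar v)\in\mathbb{V}$ is an admissible test pair; moreover $\bar u(\cdot,0)=0$ and $\bar v(\cdot,\cdot,0)=0$. Subtracting the two weak formulations of Definition \ref{def_weak} and testing with $(\phi,\psi)=(\bar u,\bar v)$ eliminates all the $u_b$-terms and, crucially, makes the two coupling surface integrals $-\int_\Omega\int_{\partial B(x)}\nu_y\cdot(D_l\nabla_y\bar v)\bar u\,d\sigma dx$ cancel when the two identities are added (here one uses that $\phi=\psi$ on $\Omega\times\partial B(x)$, i.e. $\bar u=\gamma\bar v$ there). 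This is exactly the point where the evolution-triple structure $(\mathbb{V},\mathbb{H},\mathbb{V}^*)$ and the distributed trace $\gamma$ built in the functional-setting subsection pay off.

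After this cancellation one is left, for a.e. $t\in S$, with
\begin{equation*}
\frac{1}{2}\frac{d}{dt}\left(\int_\Omega\theta\,\bar u^2\,dx+\int_\Omega\int_{B(x)}\bar v^2\,dy\,dx\right)+\int_\Omega D|\nabla_x\bar u|^2\,dx+\int_\Omega\int_{B(x)}D_l|\nabla_y\bar v|^2\,dy\,dx=\int_\Omega q\,\bar u\cdot\nabla_x\bar u\,dx,
\end{equation*}
where the $d/dt$ of the $\mathbb{H}$-norm is justified by the usual Lions-type lemma for the evolution triple (the same machinery invoked through Lemma \ref{Aubin} and the references \cite{Show_walk,sebam_PhD}). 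The convective term is handled as in the positivity lemma: since $\nabla\cdot q=0$, one has $\int_\Omega q\bar u\cdot\nabla_x\bar u\,dx=\tfrac12\int_\Omega q\cdot\nabla_x(\bar u^2)\,dx=-\tfrac12\int_\Omega(\nabla\cdot q)\bar u^2\,dx=0$ (using $\bar u\in H_0^1(\Omega)$ so the boundary term drops). Hence the right-hand side vanishes, the two dissipation terms are nonnegative, $\theta$ and $D$ are bounded below by Assumption 2, and integrating from $0$ to $t$ with zero initial data gives
\begin{equation*}
\int_\Omega\theta\,\bar u^2(t)\,dx+\int_\Omega\int_{B(x)}\bar v^2(t)\,dy\,dx\le 0,
\end{equation*}
so that $\bar u\equiv 0$ in $H_1$ and $\bar v\equiv 0$ in $H_2$ for all $t$, which is the claim. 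Strictly speaking no Gronwall argument is even needed here because the linear, divergence-free convection produces no bad term; one could also keep a Gronwall estimate in reserve in case the compatibility relation on the data forces a nonzero (but controllable) contribution, exactly as in the preceding lemma.

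The main obstacle, and the step that needs care rather than routine computation, is the rigorous justification of testing the difference of the weak formulations with $(\bar u,\bar v)$ and of the identity $\int_\Omega\theta\partial_t\bar u\,\bar u\,dx+\int_\Omega\int_{B(x)}\partial_t\bar v\,\bar v\,dy\,dx=\tfrac12\frac{d}{dt}\|(\bar u,\bar v)\|_{\mathbb H}^2$ in the $x$-dependent Bochner setting: one must know that $\partial_t(\bar u,\bar v)\in L^2(S;\mathbb V^*)$ and that the duality pairing is compatible with the (weighted, $x$-dependent) inner product of $\mathbb H$, i.e. an integration-by-parts-in-time formula for the triple $(\mathbb V,\mathbb H,\mathbb V^*)$ with the distributed trace. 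This is available from the construction of the spaces $L^2(\Omega;H^1(B(x)))$, $L^2(\Omega;H^1(\partial B(x)))$ and the boundedness of $\gamma$ recalled after Assumption 1, together with the cited work \cite{Show_walk,sebam_PhD,sebam_equadiff}; I would state this as the one genuinely nontrivial ingredient and otherwise the proof is a direct energy estimate.
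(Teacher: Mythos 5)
Your proof is correct and follows essentially the same route as the paper: form the difference of two solutions, test the subtracted weak formulations with the difference pair (using that the coupling surface integrals cancel since $\bar u=\gamma\bar v$ on $\Omega\times\partial B(x)$), and conclude from the vanishing initial data. The only departure is your treatment of the convective term, which you make vanish exactly via $\nabla\cdot q=0$ and $\bar u\in H_0^1(\Omega)$, whereas the paper absorbs it with Young's inequality and closes with Gronwall's lemma; your variant is slightly cleaner and, as you note, dispenses with Gronwall altogether.
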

\begin{proof}
Let ($u_i,v_i$), with $i\in \{1,2\}$, be two distinct arbitrarily chosen weak solutions. Then for the pair $(\rho,\theta):=(u_2-u_1,v_2-v_1)$ we have
\begin{eqnarray}
\int_\Omega \phi\partial_t\rho\varphi &+&\int_\Omega D\nabla \rho\nabla\varphi -\int_\Omega q \rho\nabla\varphi\nonumber\\
&+&\int_\Omega\int_{B(x)}\partial_t\theta\psi+\int_\Omega\int_{B(x)} D_\ell\nabla_y\theta\nabla_y\psi=0
\end{eqnarray}
for all $(\varphi,\psi)\in\mathbb{V}$.

Choosing now as test functions $(\varphi,\psi):=(\rho,\theta)\in\mathbb{V}$, we reformulate the latter identity as:
\begin{equation}\label{STAR}
\int_\Omega\frac{\phi}{2} (\partial_t\rho)^2+\int_\Omega\int_{B(x)}\frac{1}{2}(\partial_t \theta)^2+\int_\Omega D|\nabla\rho|^2+\int_\Omega\int_{B(x)}D_\ell |\nabla_y\theta|^2=\int_\Omega q\rho \nabla\rho.
\end{equation}
Noticing that for any $\epsilon>0$ we can find a constant $c_\epsilon\in ]0,\infty[$ such that
$$\int_\Omega q\rho\nabla\rho \leq \epsilon \int_\Omega |\nabla \rho|^2+c_\epsilon||q||_\infty^2\int_\Omega |\rho|^2,$$
then (\ref{STAR}) yields:
\begin{eqnarray}\label{2STAR}
\frac{1}{2}\frac{d}{dt}\int_\Omega \phi|\rho|^2&+&\frac{1}{2}\frac{d}{dt}\int_\Omega\int_{B(x)}|\theta|^2+\int_\Omega(D-\epsilon)|\nabla \rho|^2\nonumber\\
&+&\int_\Omega\int_{B(x)}D_\ell |\nabla_y\theta|^2\leq c_\epsilon||q||_\infty^2\int_{\Omega}|\rho|^2.
\end{eqnarray}
Choose
\begin{equation}\label{eps}
\epsilon\in \left]0, \min_{\overline{\Omega\times B(x)}} D\right].
 \end{equation}
 Since for all $x\in\overline\Omega$ and $y\in \overline{B(x)}$ we have $\theta(x,y,0)=\rho(x,0)=0$, (\ref{2STAR}) together with (\ref{eps}) allow for the direct application of Gronwall's inequality. Consequently, the solutions $(u_i,v_i)$ with $i\in\{1,2\}$ must coincide a.e. in space and for all $t\in S$.
\end{proof}

\begin{remark}
At the technical level, the merit of the basic estimates enumerated in this section is that they are derived in the $x$-dependent framework and not in a fixed-domain formulation. Note also that the proof of uniqueness does not rely on the use of $L^\infty$- and positivity estimates on concentrations.
\end{remark}

\subsection{Existence of weak solutions}\label{existence}

In this section, we prove existence of weak solutions of problem
$(P)$. We will do this using the Schauder fixed-point argument. The
operator, for  which we seek a fixed point, maps the space
$L^2(S;L^2(\Omega))$ into itself, and consists of a composition of
three other operators. In order to define these operators, we need
the following functional framework:
\begin{align}
&X_1:= L^2(S;L^2(\Omega)),\\
&X_2:= L^2(S;H^1_0(\Omega))\cap H^1(S;L^2(\Omega)),\\
&X_3:=L^2(S;V_2)\cap H^1(S;L^2(\Omega;L^2(B(x)))).
\end{align}

The first operator $T_1$ maps a $f\in X_1$ to
the solution $w\in X_2$ of
\begin{align}
&\int_\Omega \theta \partial_t(U+u_b)\phi\, dx+\int_\Omega
(D\nabla_x(U+u_b)-q(U+u_b))\cdot \nabla_x \phi \, dx=-\int_\Omega f \phi \, dx,
\end{align}
for all $\phi\in H_0^1(\Omega)$.

The second operator $T_2$ maps a $w \in X_2$ to a solution $v\in X_3$ of
\begin{align}
\int_\Omega \int_{B(x)} \partial_t (V+w)\psi \, dydx+
\int_\Omega \int_{B(x)} D_l \nabla_y(V+w)\cdot\nabla_y\psi \, dydx= \nonumber \\
\int_\Omega \int_{\partial B(x)}\nu_y\cdot (D_l\nabla_y(V+w))\psi \,
d\sigma dx, \label{eqv}
\end{align}
for all $\psi \in V_2$ and $t\in S$.

The third operator $T_3$ maps a $v\in X_3$ to $f\in X_1$ by
\begin{align}
f=\int_{\partial B(x)} \nu_y \cdot \nabla_y v\, d\sigma.
\end{align}

The operator $T: X_1 \rightarrow X_1$ of which a fixed point corresponds to a weak solution
op problem $(P)$ is now given by
\begin{align}
T:=T_3 \circ T_2 \circ T_1.
\end{align}

\begin{lemma}
The operator $T$ is well-defined and continuous.
\end{lemma}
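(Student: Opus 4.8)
The plan is to verify that each of the three constituent operators $T_1$, $T_2$, $T_3$ is well-defined (single-valued on the stated spaces) and continuous, and then conclude that the composition $T = T_3 \circ T_2 \circ T_1$ is well-defined and continuous. I would treat the three in order.

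For $T_1: X_1 \to X_2$, the defining identity is a linear parabolic problem for $U = u - u_b$ with homogeneous Dirichlet data, source term $-f - \theta\partial_t u_b - \nabla_x\cdot(D\nabla_x u_b - q u_b)$ (interpreted weakly), divergence-free drift $q \in L^\infty$, and coefficients $\theta, D \in L^\infty_+(\Omega)$ bounded away from zero by Assumption~2. Standard Lions-type Galerkin theory in the Gelfand triple $H^1_0(\Omega) \hookrightarrow L^2_\theta(\Omega) \hookrightarrow H^{-1}(\Omega)$ gives a unique $w = U + u_b \in X_2$ depending linearly and continuously on $f \in X_1$; continuity of $T_1$ is then immediate from the energy estimate (test with $U$, absorb the drift term via Young's inequality using $\nabla\cdot q = 0$ and $D$ bounded below). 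This step is essentially routine.

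For $T_2: X_2 \to X_3$, for a.e.\ fixed $x \in \Omega$ this is a linear heat equation for $V = v - w$ on the fixed cell $B(x)$ with the boundary term on $\partial B(x)$ playing the role of a flux; Assumption~1 guarantees $|B(x)|$ and $|\partial B(x)|$ are uniformly bounded below and the distributed trace $\gamma$ is a bounded linear operator, so one solves the family of problems and assembles the solution in the direct-integral space $X_3 = L^2(S;V_2)\cap H^1(S;L^2(\Omega;L^2(B(x))))$, with measurability in $x$ coming from the parametrized-problem construction in \cite{sebam_PhD}. The key estimate is again obtained by testing with $V$, using the trace inequality and the lower bound on $D_\ell$; the right-hand coupling integral is handled by the boundedness of $\gamma$ together with interpolation, yielding $\|v\|_{X_3} \le C(1 + \|w\|_{X_2})$ and linear continuous dependence on $w$. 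For $T_3: X_3 \to X_1$, the map $v \mapsto \int_{\partial B(x)}\nu_y\cdot\nabla_y v\, d\sigma$ is (up to the constant $D_\ell$ and sign) the normal flux; since $v \in X_3$ has $\Delta_y v = \partial_t v \in L^2$ cellwise, the conormal derivative is a well-defined element of $L^2(S; H^{-1/2}(\partial B(x)))$, but to land in $X_1 = L^2(S;L^2(\Omega))$ one reads the flux off the weak equation \eqref{eqv} rather than pointwise: testing \eqref{eqv} shows $f = \int_{\partial B(x)}\nu_y\cdot(D_l\nabla_y v)\,d\sigma$ equals, distributionally in $y$, the quantity $\int_{B(x)}(D_l\Delta_y v)\,dy = \int_{B(x)}\partial_t v\, dy$ minus lower-order terms, which is controlled in $L^2(S\times\Omega)$ by $\|v\|_{X_3}$. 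Hence $T_3$ is bounded and linear, so continuous.

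The main obstacle, and the point deserving the most care, is the $x$-measurability and the uniformity in $x$ in the construction of $T_2$: one must ensure the pointwise-in-$x$ solutions glue into a genuine element of the direct Hilbert integral $X_3$ with a norm bound uniform over $\Omega$, and that the trace/coupling term in \eqref{eqv} is handled with a constant independent of $x$. This is exactly where Assumption~1 (the $C^2$-regularity of $S_0$ and the uniform non-degeneracy of $B(x)$ and $\partial B(x)$, together with the boundedness of the distributed trace $\gamma$) is used; granting those facts — as established following \cite{sebam_equadiff,Show_walk,sebam_PhD} — the remaining arguments are standard parabolic energy estimates, and continuity of $T$ follows by composing the three continuous maps.
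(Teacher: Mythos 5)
Your proposal is correct and follows essentially the same route as the paper: decompose $T=T_3\circ T_2\circ T_1$, invoke standard linear parabolic well-posedness and energy estimates for $T_1$ and (cellwise in $x$, using Assumption 1 for uniformity and measurability) for $T_2$, and observe that $T_3$ is a bounded linear trace-type map; indeed you supply more detail than the paper, which simply cites \cite{LSU} and the direct-integral construction. The only simplification worth noting is in $T_3$: since $X_3\subset L^2(S;V_2)$ with $V_2=L^2(\Omega;H^2(B(x)))$, the gradient $\nabla_y v$ lies in $H^1(B(x))$ for a.e.\ $x$ and so has an $L^2(\partial B(x))$ trace directly, making your detour through the weak equation and $H^{-1/2}$ unnecessary.
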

\begin{proof}
Since the auxiliary problem (obtained by fixing $f$) is well-posed (see e.g. chapter 3 in \cite{LSU}), we easily see that $T_1$ is well-defined. Furthermore, by standard arguments we can ensure the stability of the weak solution to the latter problem with respect to initial and boundary data and especially with respect to the choice of the r.h.s. $f$, that is $T_1$ maps continuously $X_1$ into $X_2$.

Analogously, same arguments lead to the well-definedness of $T_2$ and to its continuity from $X_2$ to $\hat{X_2}\subset X_3$. The fact that the linear PDE \eqref{eqv} and its weak solution depend (continuously) on the fixed parameter $x\in \Omega$ is not "disturbing" at this point\footnote{Note however that this $x$-dependence will play a crucial role in getting (at a later stage) the compactness of $T_2$.}.

Since for any $v\in X_3$ the gradient $\nabla_yv$ has a trace on $\partial B(x)$, the well-definedness and continuity of $T_3$ is ensured.
\end{proof}


Furthermore we need for the fixed-point argument that the operator $T$ is
compact. It is enough that one of the operators $T_1$, $T_2$ and $T_3$ is
compact. Here we will show that $T_2$ maps $X_2$ compactly into $X_3$.

\begin{lemma}[Compactness]\label{compact}
The operator $T_3\circ T_2$ is compact.
\end{lemma}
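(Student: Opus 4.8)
The goal is to show that $T_3\circ T_2\colon X_2\to X_1$ is compact, i.e. it maps bounded sets of $X_2$ into precompact sets of $X_1=L^2(S;L^2(\Omega))$. The strategy is to factor the argument: first establish that $T_2$ maps a bounded set of $X_2$ into a set of functions $v$ enjoying uniform bounds in the stronger space $L^2(S;L^2(\Omega;H^2(B(x))))\cap H^1(S;L^2(\Omega;L^2(B(x))))$, and then show that applying the distributed trace of $\nabla_y v$ on $\partial B(x)$ lands compactly in $X_1$. The key tool is the Lions–Aubin lemma (Lemma \ref{Aubin}), applied in the $y$-variable while $x$ plays the role of a parameter, but in a way that produces genuine compactness in $x$ as well — this is exactly the point flagged in the footnote after the previous lemma.

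\emph{Step 1: energy/regularity estimates for $T_2$.} Given $w$ in a bounded subset of $X_2$, I would test the weak equation \eqref{eqv} for $v=T_2 w$ with $\psi=\partial_t(V+w)$ and with $\psi=V$ (equivalently use the fact that the problem is, for each fixed $x$, a standard linear parabolic problem on $B(x)$ with a Dirichlet datum $w(x,\cdot)$ on $\partial B(x)$), and integrate over $\Omega$. Using Assumption 1 (so that $|B(x)|$, $|\partial B(x)|$ are bounded away from $0$ and $\infty$ uniformly in $x$, and the geometry is $C^2$), together with elliptic regularity on $B(x)$ uniformly in $x$, this yields a bound
\begin{equation*}
\|v\|_{L^2(S;L^2(\Omega;H^2(B(x))))}+\|\partial_t v\|_{L^2(S;L^2(\Omega;L^2(B(x))))}\le c\,\|w\|_{X_2}.
\end{equation*}
The $x$-uniformity of the constants in the elliptic/parabolic estimates is where Assumption 1 is essential; one must check that the constants in the trace inequality $\|\cdot\|_{H^1(B(x))}\to L^2(\partial B(x))$ and in $H^2$-regularity depend only on the $C^2$-norm of $S_0$ and the lower bounds on $|B(x)|$, not on $x$ itself.

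\emph{Step 2: compactness via Lions–Aubin.} Now I view the family $\{v\}$ as bounded in
\begin{equation*}
W:=\{z\in L^2(S;L^2(\Omega;H^2(B(x)))):\ \partial_t z\in L^2(S;L^2(\Omega;L^2(B(x))))\}.
\end{equation*}
To invoke Lemma \ref{Aubin} I need $B_0:=L^2(\Omega;H^2(B(x)))\hookrightarrow\hookrightarrow B:=L^2(\Omega;H^1(B(x)))\hookrightarrow B_1:=L^2(\Omega;L^2(B(x)))$, with the first embedding compact and $B_0,B_1$ reflexive. Reflexivity is clear (Hilbert spaces). The compact embedding $L^2(\Omega;H^2(B(x)))\hookrightarrow\hookrightarrow L^2(\Omega;H^1(B(x)))$ is the genuine crux: pointwise in $x$ the embedding $H^2(B(x))\hookrightarrow H^1(B(x))$ is compact, but one needs a uniform-in-$x$ version to upgrade this to compactness of the direct-integral spaces. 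The clean way is to transport everything to a fixed reference cell: using the $C^2$ level-set description, construct a family of $C^2$-diffeomorphisms $\Phi_x\colon B(x)\to B_{\mathrm{ref}}$ depending measurably (indeed continuously) on $x$ with uniformly bounded $C^2$-norms and Jacobians bounded away from $0$ and $\infty$; then $v\mapsto v\circ\Phi_x^{-1}$ is an isomorphism (uniformly in $x$) between $L^2(\Omega;H^k(B(x)))$ and $L^2(\Omega;H^k(B_{\mathrm{ref}}))=L^2(\Omega\times B_{\mathrm{ref}})$-type spaces, and there the compact embedding $L^2(\Omega;H^2(B_{\mathrm{ref}}))\hookrightarrow\hookrightarrow L^2(\Omega;H^1(B_{\mathrm{ref}}))$ follows from the Aubin–Lions / Rellich theorem applied in the product domain $\Omega\times B_{\mathrm{ref}}$ (or from Lemma \ref{Aubin} again, in $x$). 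Thus $W\hookrightarrow\hookrightarrow L^2(S;L^2(\Omega;H^1(B(x))))$, so $\{v=T_2 w\}$ is precompact in $L^2(S;L^2(\Omega;H^1(B(x))))$.

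\emph{Step 3: conclude with the trace.} Finally, $T_3 v=\int_{\partial B(x)}\nu_y\cdot\nabla_y v\,d\sigma$. Since $\nabla_y v$ ranges in a bounded set of $L^2(S;L^2(\Omega;H^1(B(x))))$ which is precompact there, and since the distributed trace $\gamma\colon L^2(\Omega;H^1(B(x)))\to L^2(\Omega;L^2(\partial B(x)))$ is a bounded linear operator (established after Assumption 1), the composition $v\mapsto \gamma(\nabla_y v)$ maps this precompact set to a precompact set of $L^2(S;L^2(\Omega;L^2(\partial B(x))))$; integrating over the (uniformly bounded) surfaces $\partial B(x)$ against $\nu_y$ is a further bounded linear map into $X_1=L^2(S;L^2(\Omega))$, preserving precompactness. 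Hence $T_3\circ T_2$ is compact. The main obstacle, as indicated, is Step 2 — producing the compact embedding of the $x$-dependent direct-integral Sobolev spaces uniformly in $x$; once the diffeomorphism flattening is in place the rest is a routine chain of bounded linear maps plus one application of Lemma \ref{Aubin}.
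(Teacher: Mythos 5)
There is a genuine gap in Step 2, and it is fatal to the argument as written: the embedding $L^2(\Omega;H^2(B_{\mathrm{ref}}))\hookrightarrow L^2(\Omega;H^1(B_{\mathrm{ref}}))$ is \emph{not} compact, and neither Rellich on the product domain nor Lemma \ref{Aubin} ``in $x$'' delivers it. Rellich on $\Omega\times B_{\mathrm{ref}}$ requires control of $\nabla_x$, which elements of $L^2(\Omega;H^2(B_{\mathrm{ref}}))$ do not possess; more generally, a direct-integral space $L^2(\Omega;X)$ never embeds compactly into $L^2(\Omega;Y)$ (for nontrivial $X\hookrightarrow\hookrightarrow Y$) because of oscillations in $x$: the family $v_n(x,y)=\sin(nx_1)\,\phi(y)$ with $\phi\in H^2(B_{\mathrm{ref}})$ fixed is bounded in $L^2(\Omega;H^2(B_{\mathrm{ref}}))$, has $\partial_t v_n=0$, yet admits no subsequence convergent in $L^2(\Omega;H^1(B_{\mathrm{ref}}))$, and its image under $T_3$, namely $\sin(nx_1)\int_{\partial B}\nu\cdot\nabla\phi\,d\sigma$, has no convergent subsequence in $L^2(\Omega)$. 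Since $T_3$ integrates out $y$, compactness of $T_3\circ T_2$ into $X_1=L^2(S;L^2(\Omega))$ must come from the variables $(t,x)$; your Step 1 bounds (which are only $L^2$ in $x$) combined with Aubin--Lions in $t$ give compactness in $t$ and $y$ but say nothing about $x$, so the conclusion of Step 3 does not follow.

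The missing ingredient --- and the actual content of the footnote you cite --- is a derivative bound \emph{in $x$} on $v=T_2w$, inherited from the boundedness of $w$ in $L^2(S;H^1_0(\Omega))$. This is precisely what the paper proves: after flattening $B(x)$ to a reference cell $B(0)$ by a regular $C^2$-motion (your diffeomorphism construction is the same device, but used for a different purpose), one takes difference quotients $V_h^i$ in the macroscopic variable, tests the transformed equation with $V_h^i$, and deduces the ``additional two-scale regularity'' $\hat V\in L^2(S;H^1(\Omega;H^2(B(0))\cap H^1_0(B(0))))$. Only then is Lemma \ref{Aubin} applied, with $B_0=H^1(\Omega)$ and $B=B_1=L^2(\Omega)$, \emph{after} the trace $T_3$ has been taken, so that the compact embedding actually being exploited is $H^1(\Omega)\hookrightarrow\hookrightarrow L^2(\Omega)$ in the macroscopic variable. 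Your Steps 1 and 3 are acceptable as bookkeeping, but you must replace the false direct-integral embedding of Step 2 by this $x$-difference-quotient estimate.
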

\begin{proof}
We will first reformulate \eqref{eqv} by mapping the $x$-dependent domains for the $y$-coordinate to the referential domain $B(0)$ so that the transformed solution $\hat{v}$ is
in $L^2(S;L^2(\Omega;L^2(B(0))))\cap H^1(S;L^2(\Omega;L^2(B(0))))$

This transformation is a mapping $\Psi: \Omega\times B(0) \rightarrow
\Omega \times B(x)$.
We call $\Psi$ a {\em regular $C^2$-motion} if $\Psi\in C^2(\Omega\times B(0))$
with the property that for each $x\in \Omega$
\begin{align}
\Psi(x,\cdot):B(0)\rightarrow B(x):=\Psi(x,B(0))
\end{align}
is bijective, and if there exist constants $c,C>0$ such that
\begin{align}
c\leq \det \nabla_y\Psi (x,y)\leq C,
\end{align}
for all $(x,y)\in\Omega\times B(0)$.
The existence of such a mapping is ensured by the fact that $S_0\in
C^2(\overline{\Omega\times U})$, by Assumption 1.

If $\Psi$ is a regular $C^2$-motion, then the quantities
\begin{align}
F:=\nabla_y\Psi\,\,\mbox{and}\,\,J:=\det F
\end{align}
are continuous functions of $x$ and $y$. Furthermore, we have the following calculation rules:
\begin{align*}
&\nabla_y v=F^{-T}\nabla_{\hat{y}} \hat{v},\\
&\partial_t v=\partial_t\hat{v},\\
&\int_{\partial B(x)} \nu_y\cdot j \,d\sigma=\int_{\Gamma_0} JF^{-T}\hat{\nu}_{
\hat{y}}\cdot \hat{j} \,d\sigma.
\end{align*}
The transformed version of \eqref{eqv} is now written as:
let $w\in X_2$ be given, find $\hat{V}\in L^2(S;L^2(\Omega;H^1_0(B(0))))\cap H^1(S;L^2(\Omega;L^2(B(0))))$
\begin{align}
\int_\Omega \int_{B(0)} \partial_t (\hat{V}+w)\psi J \, dydx+
\int_\Omega \int_{B(0)} JF_{-1}D_lF^{-T}\nabla_y(\hat{V}+w)\cdot\nabla_y\psi \, dydx= \nonumber \\
\int_\Omega \int_{\Gamma_0}\hat{\nu}_y\cdot (JF^{-1}D_lF^{-T}\nabla_y(\hat{V}+w))\psi \,
d\sigma dx, \label{eqvhat}
\end{align}
for all $\psi \in L^2(\Omega;H^1_0(B(0))) $ and $t\in S$.

Denote by $\Gamma_0$ the boundary of $B(0)$.

\begin{claim}
$\Gamma_0$ is $C^2$.
\end{claim}
\begin{proof}[Proof of claim] The conclusion of the Lemma is a straightforward consequence of the regularity of $S_0$, by Assumption 1.
\end{proof}

\begin{claim}[Interior and boundary $H^2$-regularity]\label{int} Assume Assumptions 1 and 2 and  take $\hat V_I\in L^2(\Omega,H^1(B(0)))$. Then
\begin{equation}
\hat V\in L^2(S;L^2(\Omega;H^2_{loc}(B(0))\cap H^1_0(B(0)))).
\end{equation}
Since $\Gamma_0$ is $C^2$, we have
\begin{equation}
\hat V\in L^2(S;L^2(\Omega;H^2(B(0))\cap H^1_0(B(0)))).
\end{equation}
\end{claim}
\begin{proof}[Proof of claim]
The proof idea follows closely the lines of Theorem 1 and Theorem 4 (cf. \cite{Evans}, sect. 6.3)
\end{proof}

\begin{claim}[Additional two-scale regularity] Assume the hypotheses of Lemma \ref{int} to be satisfied. Then
\begin{equation}
\hat V\in L^2(S;H^1(\Omega;H^2(B(0))\cap H^1_0(B(0)))).
\end{equation}
\end{claim}
\begin{proof}[Proof of claim]
Let us take $\emptyset\neq \Omega'\subset \Omega$ arbitrary such that $h:=dist(\Omega',\partial \Omega)>0$. At this point, we wish to show that
\begin{equation}\label{Reg}
\hat V\in L^2(S;H^1(\Omega';H^2(B(0))\cap H^1_0(B(0)))).
\end{equation}
The extension to $L^2(S;H^1(\Omega;H^2(B(0))\cap H^1_0(B(0))))$ can be done with help of a cutoff function as in \cite{Evans} (see e.g. Theorem 1 in sect. 6.3). We omit this step here and refer the reader to {\em loc. cit.} for more details on the way the cutoff enters the estimates. To simplify the writing of this proof, instead of $\hat V$ (and other functions derived from $\hat V$) we  write $V$ (without the hat). Furthermore, since here we focus on the regularity w.r.t. $x$ of the involved functions, we omit to indicate the dependence of $U$ on $t$ and of $V$ on $y$ and $t$.
For all $t\in S$, $x\in \Omega'$ and $Y\in Y_0$, we denote by $U_h^i$ and $V_h^i$ the following difference quotients with respect to the variable $x$:
\begin{eqnarray}
U_h^i (x,t) & := & \frac{U(x+he_i,t)-U(x,t)}{h},\nonumber\\
V_h^i(x,y,t) &:= & \frac{V(x+he_i, y,t)-V(x, y,t)}{h}\nonumber.
\end{eqnarray}
We have for all $\psi\in L^2(\Omega',H_0^1(B(0)))$ the following identities:
\begin{eqnarray}
&&\int_{\Omega'\times B(0)}J(x+he_i)\partial_t (V(x+h e_i)+U(x+he_i))\psi +  \int_{\Omega'\times B(0)} S(x+h e_i)\nabla_y V(x+he_i)\nabla_y\psi\nonumber\\
&- &\int_{\Omega'\times \Gamma_0}\nu_y\cdot (S(x+he_i)D_\ell\nabla_yV(x+he_i))\psi d\sigma =0
\end{eqnarray}
and
\begin{eqnarray}
&&\int_{\Omega'\times B(0)}J(x)\partial_t (V(x)+U(x))\psi +  \int_{\Omega'\times B(0)} S(x)\nabla_y V(x)\nabla_y\psi\nonumber\\
&- &\int_{\Omega'\times \Gamma_0}\nu_y\cdot (S(x)D_\ell\nabla_yV(x))\psi d\sigma =0.
\end{eqnarray}
Subtracting the latter two equations, dividing the result by $h>0$ and choosing then as test function $\psi:=V_h^i$ yields the expression $$A_1+A_2+A_3=0,$$
where
\begin{eqnarray}
A_1&:=& \int_{\Omega'\times B(0)}V_h^i\left[J(x+he_i)\partial_t(V(x+he_i)+U(x+he_i))-J(x)\partial_t(V(x)+U(x))\right]\frac{1}{h}\nonumber\\
& = & \int_{\Omega'\times B(0)}V_h^i (\partial_t V_h^i+\partial_t U_h^i)J(x)+\int_{\Omega'\times B(0)}(\partial_t V(x+he_i)+\partial_t U(x+he_i))J_h^i(x)V_h^i\nonumber\\
A_2&:=& \int_{\Omega'\times B(0)}\frac{1}{h}\left[S(x+he_i)\nabla_yV(x+he_i)-S(x)\nabla_yV(x)\right]\nabla_y V_h^i \nonumber\\
&=& \int_{\Omega'\times B(0)}S \nabla_y V_h^i\nabla_y V_h^i+ \int_{\Omega'\times B(0)}S_h^i\nabla_y V(x+he_i)\nabla_y V_h^i\nonumber\\
A_3&:=& -\int_{\Omega'\times \Gamma_0}\frac{1}{h}\nabla_y\cdot \left[S(x+he_i)\nabla_yV(x+he_i)-S(x)\nabla_yV(x)\right]V_h^i\nonumber\\
& = & -\int_{\Omega'\times \Gamma_0}\nu_y\cdot (S_h^i\nabla_y V(x+he_i)+S\nabla_y V_h^iV_h^i)\nonumber.
\end{eqnarray}
Re-arranging conveniently the terms, we obtain the following inequality:
\begin{eqnarray}
\frac{1}{2}\int_{\Omega'\times B(0)}(V_h^i)^2|J(x)|& + & \int_{\Omega'\times B(0)}|S(x)|(\nabla_y V_h^i)^2 \leq \int_{\Omega'\times B(0)}|V_h^i\partial_t U_h^iJ(x)|\nonumber\\
&+ & \int_{\Omega'\times B(0)}|(\partial_t V(x+he_i)+\partial_t U(x+he_i))J_h^i(x)V_h^i|\nonumber\\
&+ &\int_{\Omega'\times B(0)}|S_h^i \nabla_y V(x+he_i)\nabla_y V_h^i|\nonumber\\
& + & \int_{\Omega'\times \Gamma_0}|\nu_y\cdot (S\nabla_y V_h^i)V_h^i|+\int_{\Omega'\times \Gamma_0}|\nu_y\cdot (S_h^i\nabla_y V(x+he_i)V_h^i)|\nonumber\\
&=&\sum_{\ell=1}^5 I_\ell.
\end{eqnarray}
To estimate the terms $I_\ell$ we make use of Cauchy-Schwarz and Young inequalities, the inequality between the arithmetic and geometric means, and of the trace inequality. We get
\begin{equation}
|I_1|\leq \frac{||J||^2_{L^\infty(\Omega'\times B(0))}}{2}||V_h^i||_{L^2(\Omega'\times B(0))}+\frac{1}{2}||\partial_t U_h^i||_{L^2(\Omega'\times B(0))},
\end{equation}
\begin{eqnarray}
|I_2|& \leq & \frac{||J||^2_{L^\infty(\Omega'\times B(0))}}{2}2\left(||\partial_t V(x+he_i)||_{L^2(\Omega'\times B(0))}+||\partial_t U(x+he_i)||_{L^2(\Omega'\times B(0))}\right)\nonumber\\
&+& ||V_h^i||_{L^2(\Omega'\times B(0))},
\end{eqnarray}
\begin{equation}
|I_3|\leq \epsilon ||\nabla_y V_h^i||^2_{L^2(\Omega'\times B(0))}+c_\epsilon ||S_h^i||^2_{L^\infty(\Omega'\times B(0))} ||\nabla_y V(x+he_i)||^2_{L^2(\Omega'\times B(0))},
\end{equation}
\begin{eqnarray}
\int_{\Omega'\times \Gamma_0}|\nu_y\cdot (S\nabla_y V_h^i)V_h^i| &\leq &||S||_{L^\infty(\Omega'\times \Gamma_0)} ||V_h^i||_{L^\infty(\Omega'\times \Gamma_0)}\int_{\Omega'\times \Gamma_0}|\nu_y\cdot \nabla_y V_h^i|\nonumber\\
& \leq &  |B(0)|^\frac{1}{2}||S||_{L^\infty(\Omega'\times \Gamma_0)} ||V_h^i||_{L^\infty(\Omega'\times \Gamma_0)}
||V_h^i||^2_{L^1(\Omega';H^2(B(0)))},\nonumber\\
\end{eqnarray}
and
\begin{eqnarray}
\int_{\Omega'\times \Gamma_0}|\nu_y\cdot (S\nabla_y V(x+he_i))V_h^i| &\leq &|B(0)|^\frac{1}{2}||S||_{L^\infty(\Omega'\times \Gamma_0)} ||V_h^i||_{L^\infty(\Omega'\times \Gamma_0)}
||V||^2_{L^1(\Omega';H^2(B(0)))}.\nonumber\\
\end{eqnarray}
Note that all terms $|I_\ell|$ are bounded from above. To get their boundedness we essentially rely on the energy estimates for $V$, $U$, $U_h^i$ as well as on the $L^\infty$-estimates on $V$ and $V_h^i$ on sets like $\Omega'\times B(0)$ and $\Omega'\times \Gamma_0$. The conclusion of this proof follows by applying Gronwall's inequality.
\end{proof}

Using the claims above, we are now able to finish the proof of Lemma \ref{compact}, by noting that $T_3\circ T_2: L^2(S;H^1(\Omega;H^2\cap H_0^1(B_0)))\to L^2(S;H^1(\Omega))$ is continuous and compact via applying Lemma \ref{Aubin} with $B_0=H^1(\Omega)$ and $B=B_1=L^2(\Omega)$.
\end{proof}
Putting now together the above results, we are able to formulate the
main result of section \ref{analysis}, namely:
\begin{theorem}\label{main_result}
Problem (P) admits at least a global-in-time weak solution in the
sense of Definition \ref{def_weak}.
\end{theorem}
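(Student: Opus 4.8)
The plan is to obtain the weak solution as a fixed point of the operator $T=T_3\circ T_2\circ T_1$ on $X_1=L^2(S;L^2(\Omega))$, via Schauder's fixed-point theorem (in the Leray--Schauder form). The two preceding lemmas already supply the functional-analytic ingredients: $T$ is well-defined and continuous on $X_1$, and by Lemma~\ref{compact} the composition $T_3\circ T_2$, hence $T$ itself, is compact. What remains is (a) to exhibit a nonempty, closed, bounded, convex subset of $X_1$ that $T$ maps into itself — equivalently, to bound the set $\{f\in X_1\,:\,f=\lambda Tf\ \text{for some}\ \lambda\in[0,1]\}$ — and (b) to verify that a fixed point of $T$ is a weak solution in the sense of Definition~\ref{def_weak}.

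For step (a) I would invoke the a priori bounds established in the Lemma of Section~\ref{analysis} (positivity, the $L^\infty$-bounds $u\le M_1$, $v\le M_2$, and the energy inequality with constant $c_1$). The relevant observation is that these estimates are stable under the fixed-point construction: if $f=\lambda Tf$, then the pair $(u,v):=(T_1f+u_b,\ T_2T_1f)$ solves a problem of exactly the structure of $(P)$ but with the exchange flux scaled by $\lambda\in[0,1]$, so the same sign arguments, the arithmetic--geometric mean inequality, and Gronwall's lemma yield bounds on $v$ in $L^2(S;L^2(\Omega;V_2))\cap L^\infty(S;H_2)$ — and in particular on the trace of $\nabla_y v$ on $\Omega\times\partial B(x)$, i.e.\ on $\|f\|_{X_1}=\|T_3v\|_{X_1}$ — that depend only on the data through $M_1,M_2,\|q\|_\infty,D_l$ and the geometry, and not on $\lambda$ or $f$. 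This produces a radius $R>0$ with $T(\overline{B_R})\subseteq\overline{B_R}$ in $X_1$; restricted to this closed convex bounded set $T$ is continuous and compact, so Schauder's theorem yields $f^\star=Tf^\star$.

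For step (b), put $U:=T_1f^\star$, $v:=T_2U$ and $u:=U+u_b$. Since the micro-corrector delivered by $T_2$ lies fiberwise in $H^1_0(B(x))$, its trace on $\partial B(x)$ vanishes, so $v=u$ on $\Omega\times\partial B(x)$; hence $(U,v)\in\mathbb{V}$ for a.e.\ $t\in S$, while $u=u_b$ on $\partial\Omega$ and the initial data are encoded in $X_2,X_3$. Substituting $f^\star=T_3v=\int_{\partial B(x)}\nu_y\cdot\nabla_y v\,d\sigma$ into the identity defining $T_1$ reproduces the first variational identity of Definition~\ref{def_weak}, and the defining identity of $T_2$ is precisely the second. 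Thus $(u,v)$ is a weak solution. Global-in-time existence is then immediate: the constants $M_1,M_2$ and the energy constant $c_1$ are independent of the final time $T$, so the argument runs unchanged on $(0,T)$ for every $T>0$.

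\emph{Main obstacle.} I expect the delicate point to be the uniform a priori estimate underlying step (a): transferring the $x$-dependent energy and $L^\infty$ bounds from ``any weak solution of $(P)$'' to the auxiliary $\lambda$-scaled problems arising along the fixed-point scheme, and, most of all, controlling the normal-flux functional $T_3v=\int_{\partial B(x)}\nu_y\cdot\nabla_y v\,d\sigma$ in $L^2(S;L^2(\Omega))$. This is exactly where the two-scale regularity of $v$ jointly in $x$ and $y$ — the interior/boundary $H^2$-estimates and the additional $H^1$-in-$x$ regularity proved inside Lemma~\ref{compact}, together with the Lions--Aubin compactness of Lemma~\ref{Aubin} — is indispensable, and where the $x$-dependence of the micro-domains $B(x)$ must be handled carefully (via the regular $C^2$-motion $\Psi$) rather than glossed over.
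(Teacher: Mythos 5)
Your proposal follows exactly the paper's strategy --- a Schauder fixed-point argument for $T=T_3\circ T_2\circ T_1$ on $X_1$, relying on the well-definedness/continuity lemma, the compactness of $T_3\circ T_2$ (Lemma \ref{compact}), and the a priori positivity, $L^\infty$ and energy estimates to obtain an invariant closed bounded convex set --- and it is correct. The paper itself compresses this final assembly into the single phrase ``putting now together the above results,'' so your step (a) on the $\lambda$-scaled uniform bound and your step (b) identifying the fixed point with a weak solution in the sense of Definition \ref{def_weak} actually supply details the paper leaves implicit.
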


\section{Discussion}\label{discussion}
The remaining challenge is to make the asymptotic homogenization
step (the passage  $\epsilon\to 0$) rigorous. Due to the
$x$-dependence of the microstructure the existing rigorous ways of
passing to the limit seem to fail   \cite{BLM,HJM,MM10}. As next step,
we hope to be able to marry succesfully  the philosophy of the
corrector estimates
 analysis by Chechkin and Piatnitski \cite{Chechkin} with the intimate two-scale structure of our model.


\end{document}